\DeclareSIUnit\bit{b}
\DeclareSIUnit\decibelI{dBi}
\DeclareFontFamily{U}{mathc}{}
\DeclareFontShape{U}{mathc}{m}{it}%
{<->s*[1.03] mathc10}{}
\DeclareMathAlphabet{\mathscr}{U}{mathc}{m}{it}
\tikzstyle{input} = [rectangle, rounded corners, minimum width=2cm, minimum height=1cm, text centered, draw=black, fill=orange!30]
\tikzstyle{output} = [rectangle, rounded corners, minimum width=2cm, minimum height=1cm, text centered, draw=black, fill=blue!30]
\tikzstyle{table} = [diamond, minimum width=2cm, minimum height=1cm, text centered, draw=black, fill=green!30]
\newcommand{\argmax}[1]{\underset{#1}{\operatorname{arg\,max\,}}}
\DeclareMathOperator{\diag}{diag}
\DeclareMathOperator{\E}{\operatorname{\mathbb{E}}}
\DeclareMathOperator{\cosTrunc}{\overline{\cos}}
\DeclareMathOperator{\sinTrunc}{\overline{\sin}}
\definecolor{lines-1}{RGB}{228,26,28}
\definecolor{lines-2}{RGB}{55,126,184}
\definecolor{lines-3}{RGB}{77,175,74}
\definecolor{lines-4}{RGB}{152,78,163}
\definecolor{lines-5}{RGB}{255,127,0}
\definecolor{lines-6}{RGB}{255,255,51}
\definecolor{lines-7}{RGB}{166,86,40}
\definecolor{lines-8}{RGB}{247,129,191}
\definecolor{lines-9}{RGB}{153,153,153}
\definecolor{verylightyellow}{RGB}{242,237,89}
\definecolor{verylightgray}{RGB}{200,200,200}
\definecolor{verylightred}{RGB}{255,191,191}
\definecolor{verylightgreen}{RGB}{117,234,123}
\pgfplotsset{
	compat=1.14,
	width =\columnwidth, 
	height=.7\columnwidth,
	ylabel absolute, ylabel style={yshift=-0.30cm},
	xlabel absolute, xlabel style={yshift=0.2cm},
	label style={font=\footnotesize},
	tick label style={font=\footnotesize},
	legend style={font=\footnotesize},
	minor grid style={dotted},
}
\newtheorem{proposition}{Proposition}
\newtheorem{theorem}{Theorem}
\newcommand{\ed}{\color{black}} 
\newcommand{\pushright}[1]{\ifmeasuring@#1\else\omit\hfill$\displaystyle#1$\fi\ignorespaces}
\newcommand{\pushleft}[1]{\ifmeasuring@#1\else\omit$\displaystyle#1$\hfill\fi\ignorespaces}
\newcommand{\subalign}[1]{%
	\vcenter{%
		\Let@ \restore@math@cr \default@tag
		\baselineskip\fontdimen10 \scriptfont\tw@
		\advance\baselineskip\fontdimen12 \scriptfont\tw@
		\lineskip\thr@@\fontdimen8 \scriptfont\thr@@
		\lineskiplimit\lineskip
		\ialign{\hfil$\m@th\scriptstyle##$&$\m@th\scriptstyle{}##$\crcr
			#1\crcr
		}%
	}
}
\newcommand{\vast}{\bBigg@{3.5}}
\begin{document}
	
	\title{Cram\'er-Rao Bound  Analysis of Radars for Extended Vehicular Targets with Known and Unknown Shape}

	\author{\IEEEauthorblockN{
		Nil~Garcia, Alessio~Fascista, \IEEEmembership{Member, IEEE}, Angelo~Coluccia, \IEEEmembership{Senior Member, IEEE}, Henk~Wymeersch, \IEEEmembership{Senior Member, IEEE}, Canan~Aydogdu,  Rico~Mendrzik, \IEEEmembership{Member, IEEE}, Gonzalo~Seco-Granados, \IEEEmembership{Senior Member, IEEE}	
\thanks{The work was supported in part by a seed project from Electrical Engineering Dept. at Chalmers and by the Swedish Research Council under grant 2018-03705, in part by Spanish project PID2020-118984GB-I00 and by the ICREA Academia programme.}
\thanks{N. Garcia, H. Wymeersch, and C. Aydogdu are with the Dept. of Electrical Engineering, Chalmers University of Technology, 412 96 Gothenburg, Sweden (e-mail: nil.garcia@gmail.com; henkw@chalmers.se; cananayy@gmail.com).}
\thanks{A. Fascista and A. Coluccia are with the Dept. of Innovation Engineering, Universit\`a del Salento, Via Monteroni, 73100 Lecce, Italy (e-mail: name.surname@unisalento.it).}
\thanks{R. Mendrzik is with Ibeo Automotive Systems GmbH, 22143 Hamburg, Germany (e-mail: rico.mendrzik@ibeo-as.com).}
\thanks{G. Seco-Granados is with the Dept. of Telecommunications and Systems Engineering, Universitat Aut\`onoma de Barcelona, 08193 Barcelona, Spain (e-mail: gonzalo.seco@uab.cat).}
		}
	}
	
	\maketitle

	\begin{abstract}
	Due to their shorter operating range and large bandwidth, automotive radars can resolve many reflections from their targets of interest, mainly vehicles. This calls for the use of extended-target models in place of simpler and more widely-adopted point-like target models. However, despite some preliminary work, the fundamental connection between the radar’s accuracy as a function of the target vehicle state (range, orientation, shape) and radar properties remains largely unknown for extended targets. In this work, we first devise a mathematically tractable analytical model for a vehicle with arbitrary shape, modeled as an extended target parameterized by the center position, the orientation (heading) and the perimeter contour. We show that the derived expressions of the backscatter signal are tractable and correctly capture the effects of the  extended-vehicle shape. Analytical derivations of the exact and approximate hybrid Cram\'er-Rao bounds for the position, orientation and contour are provided, which reveal connections  with  the  case  of  point-like target and uncover the main dependencies with the received  energy, bandwidth, and array size. The theoretical investigation is performed on the two different cases of known and unknown vehicle shape. Insightful simulation results are finally presented to validate the theoretical findings, including an analysis of the diversity effect of  multiple radars sensing the extended target.
	
	\end{abstract}
	
	\begin{IEEEkeywords}
		Automotive, radar, extended target, Cram\'er-Rao bound (CRB)
	\end{IEEEkeywords}
	
	%

	
		\bstctlcite{IEEEexample:BSTcontrol}

	\section{Introduction}
	
	Automotive radars are becoming the norm in modern cars thanks to their capability to estimate the speed and position of objects in the vicinity \cite{patole2017automotive,ARSP_2021,SPM_2019, SPM_2019b}. Today, most automotive radar systems operate  between \SIrange{76}{81}{\giga\hertz} which is part of the so-called millimeter-wave (mmWave) spectrum.  They are used in many applications of advanced driver assistance systems (ADAS) such as lane change assistance, automatic park control, cruise control, and are expected to become one of the leading technology for autonomous driving \cite{dickmann2016automotive,ITS_2020,Poor_ADAS}. Beyond the direct applications of radars on single vehicles, radars are also enablers for some forthcoming functionalities of vehicular networks, such as cooperative positioning \cite{frohle2018cooperative,soatti2018implicit,T_ITS} or platooning \cite{tsugawa2016review}.
	Thus, it is in the best interest of researchers/engineers to develop good models for analyzing radars' sensing accuracy in a vehicular environment.
	
	In the classical radar literature, the target is often assumed so distant that its reflection appears to come effectively from a single point in space, and as such it can be modeled by just four parameters: range, direction, radar cross-section and velocity. For instance, a typical military search radar detects aircraft at ranges in excess of \SI{400}{\kilo\meter} \cite{mark2010principles}, using adaptive techniques \cite{CFAR-FP,KNN,KNN_2,RangeSpread}.
	In agreement with well known principles of radars \cite[Ch.~11.3]{skolnik1981introduction}, the Cram\'{e}r-Rao bound (CRB) on the point-target model shows that the range and direction variance are inversely proportional to the square of the signal bandwidth and the array aperture, respectively \cite{brennan1961angular,miller1978modified}. The same dependencies have been observed in the context of non-radar based cooperative wireless localization \cite{Shen1}, also considering the effects of multipath propagation \cite{Shen2}.
	On the other hand, automotive radars with high range/angle accuracy operate much closer to their targets (e.g., vehicles) and can resolve many reflections around the objects \cite{dickmann2016automotive}. Such targets can no longer be described as single points in the space and are referred to as extended targets. 
	In the literature on automotive radars, there are studies that propose novel extended-target models \cite{granstrom2016extended, ICASSP2021}, advanced tracking algorithms \cite{hammarstrand2012extended}, and measure the variance bounds on some parameters of interest such as range and direction \cite{zhang2005dynamic}. However, the fundamental connection between the radar's accuracy as a function of the vehicle state (range, orientation, shape, velocity) and radar properties (emitted power, bandwidth, waveform) for extended targets remains largely unknown.

	A simple approach is to avoid modelling the target contour, and then infer the extended-target kinematic properties directly from the radar measurements. For instance, in \cite{Karl1} it has been shown that the main target parameters can still be estimated without modelling its shape, but only when the target is small enough. In \cite{buhren2006simulation}, the main reflection points of a vehicle (including the wheels) are modeled based on real world observations. Although this approach is  quite flexible, the achieved accuracy is often very limited and, moreover, it cannot capture important information related to the turning maneuvers of the target. By introducing a model for the shape of the extended target, it becomes possible to capture rigid rotations and in turn to infer more accurate position and kinematic information. To describe an extended target, different contour models exist in the radar literature. A common approach is to assume a specific basic geometric contour for the target, such as a rectangle \cite{knill2016direct, RectShape, RectShape2, Karl2}, a circle \cite{Circle1,Circle2}, or an ellipse \cite{ICASSP2021, Ellipse1, Ellipse2}. While such models allow for an elegant formulation and resolution of tracking problems, typical vehicular targets cannot be accurately represented by a simple geometric shape. When it comes to modelling targets with arbitrary shapes, the problem becomes significantly more complex and the approaches available in literature follow two main strategies. A first type of methods models the target contour through deterministic or stochastic curves parameterized by a set of chosen parameters. For instance, in \cite{Parametric1} deformations from the basic geometric shapes are considered to model the extended-target contours, while in \cite{GaussProc_Shape, GaussProc2_Shape, GP1, GP2} probabilistic contour models based on Gaussian processes are considered. Random hypersurfaces \cite{RHM_Shape} and B-spline curves \cite{BSpline_Shape, B-Spline1} have been also considered valid options for modelling extended targets with arbitrary shapes. On the other hand, the second type of approaches considers a combination of multiple ellipses to obtain a more accurate and detailed representation of the target contour \cite{Karl3, MultipleEllipses1}. The adoption of these extended-target models in vehicular tracking tasks allows to capture useful details such as the rounded corners of targets, but typically comes at the price of a significantly increased complexity and lack of interpretability in terms of the main radar parameters. 

In this work, we build on the approach used in \cite{staib1989parametrically} and derive a    model that is able to capture the backscatter effects generated by a waveform impinging on an extended target with arbitrary shape, while keeping the number of parameters used to describe the contour tractable. Specifically, we introduce a two-dimensional geometrical contour model whose components are parameterized by a truncated Fourier series with a small number of coefficients, which naturally incorporates the prior knowledge that vehicular targets are symmetric, allowing to correctly infer their main parameters even when the radar does not illuminate the whole contour. The derived expressions of the backscatter signals are then used to carry out a  Fisher information analysis aimed at investigating the theoretical accuracy achievable in the estimation of the parameters of interest (namely range, direction and orientation), using the mathematically tractable tool of the hybrid CRB (HCRB). In \cite{xu2009hybrid}, a HCRB is proposed for tracking ground-moving targets; however, the vehicle contour is assumed to be perfectly known a priori and is described by a simple rectangular shape. Furthermore, the considered signal model does not take into account the radar characteristics such as the signal bandwidth and the number of receive antennas. 
Other studies compute the posterior CRB with recursive observations \cite{tichavsky1998posterior,zhong2010comparison, PCRB_extend} but their formulas end up being too complex to extract any type of intuition.
	
		This work aims at partially closing the knowledge gap between the simplicity of point-like target models and the unbearable complexity of extended-target models. Specifically, it provides a novel framework for analyzing the performance of an automotive radar sensing the range, direction and orientation information. The main idea consists in modeling the vehicle as an extended target parameterized by a set of unknown parameters: the position of its center, the orientation (heading) and the perimeter contour (as shown in Fig.~\ref{fig:geometric_model}). For tractability, we consider a sufficiently short observation time over which the vehicle can be considered static.
		We address the more general scenario in which the vehicle contour can be arbitrary and unknown, and generates multiple reflections according to the specific portion that has been illuminated. 
	
The main contributions of this work are as follows.
\begin{itemize}
    \item A novel \emph{mathematically tractable  analytical  model}  for  an extended target  with arbitrary  shape is proposed,  parameterized   by   the   center position, the orientation (heading) and the perimeter contour. Remarkably, it is shown that the   model is sufficiently rich to  capture the backscatter effects of the extended-vehicle shape, despite the quite complex scenario at hand. 
\item A \emph{fundamental Fisher information analysis} based on the HCRB is conducted, which allows to uncover the main dependencies of range, direction, and orientation estimation upon received energy, effective bandwidth, array's effective aperture, and reflection coefficient. 
\item The impact of \emph{lack of knowledge of the target contour} onto the achievable radar localization performance is investigated, by deriving the HCRB for the two different cases of known and unknown contour. Based on such an analysis, interesting connections are drawn between extended and point-like targets. 
\item The diversity effect of having \emph{multiple radars sensing the extended target} is finally investigated, which turns out to improve significantly the localization accuracy, especially when the target contour is unknown and should be inferred from scratch.
\end{itemize}

	The rest of the paper is organized as follows. In Section \ref{sec:signal_model}, we derive the models for the vehicle contour and the received signal. From the models, the general HCRB is obtained  in Section~\ref{sec:HCRB}. In Section~\ref{sec:HCRB_asymp}, we provide approximate closed-form expressions for the HCRB in case of known and unknown target shape, together with a thorough analysis including a comparison with the CRB of a point-like target. In  Section~\ref{sec:simulations} we provide numerical results that show the correctness of the derived expressions. Conclusions are given in Section~\ref{sec::conclusions}.
	
	\begin{table}[ht]
\caption{Table of Main Symbols}
\small
\begin{tabular}{c|l}
\hline\hline
\textbf{Symbol} & \textbf{Description} \\ [0.5ex] 
\hline
$\mathcal{C}$ &Extended-target contour \\
$Q$ & Number of Fourier series coefficients \\
$a_q$&Contour coefficients along the $x$-axis \\
$b_q$ &Contour coefficients along the $y$-axis \\
$\mathbf{m}$ & Vector of $a_q$ coefficients  \\
$\mathbf{n}$ & Vector of $b_q$ coefficients  \\
$\boldsymbol{\sigma}$ & Vector of cosine harmonics \\
$\boldsymbol{\varsigma}$ & Vector of sine harmonics \\
$\mathbf{p}$ & Position of the center of target, eq. \eqref{eq:p}  \\
$\mathring{d}$ & Distance from center of target\\
$\mathring{\phi}$ & Direction of the center of target \\
$\varphi$ & Orientation (heading) of target \\
$u$ & Variable spanning the contour for $0 \leq u < 2\pi$\\
$\boldsymbol{\rho}(u)$ & Single point of contour in local coordinates, eq. \eqref{eq:perim}\\
$\mathbf{r}(u)$ & Single point of contour in global coordinates, eq. \eqref{eq:perimeter} \\
$d(u)$ & Distance from point $\mathbf{r}(u)$ on  target contour, eq. \eqref{eq::defdu} \\
$\phi(u)$ & Direction of point $\mathbf{r}(u)$ on  target contour in \eqref{eq::defphiu}\\
$\psi(u)$ & Angle between contour normal and LOS, eq. \eqref{eq:paremeters:theta}\\
$\mathbf{R}$ & Rotation matrix of the reference frame, eq. \eqref{eq:rotation}  \\
$s(t)$ & Radar transmitted waveform \\
$\mathrm{d}P(t)$ & Differential received power, eq. \eqref{eq:dP} \\
$B$ & Signal bandwidth \\
$B_\text{RMS}$ & Effective bandwidth \\
$f_c$ & Carrier frequency \\
$\lambda$ & Signal wavelength \\
$\Upsilon(\phi)$ & Radar antenna directivity \\
$\alpha$ & Target surface roughness \\
$G$ & Receiver gain \\
$P(t)$ & Aggregated power at the radar \\
$\ell_\text{T}$ & Total target perimeter \\
$K$ & Number of disjoint segments of target contour \\
$\ell_\text{R}$ & Length of each disjoint contour segment \\
$\mathcal{C}_k, d_k, \ldots$ & Parameters related to the $k$-th  disjoint segment\\
$h_k$ & Channel coefficient along the $k$-th path \\
$N$ & Number of antenna elements \\
$\mathbf{a}(\phi)$ & Array response vector \\
$\mathbf{e}(t)$ & Signal across antenna elements \\
$\mathbf{y}(t)$ & Received signal \\
$\mathbf{w}(t)$ & Additive white Gaussian noise \\
$T$ & Observation interval \\
$E$ & Received energy \\
$N_0$ & Noise power spectral density \\
$\mathbf{h}$ & Unknown random channel parameters in \eqref{eq::h} \\
$g$ & Deterministic nuisance channel parameter \\
$\boldsymbol{\gamma}$ & Vector of deterministic parameters of interest in \eqref{eq::gamma} \\
$\boldsymbol{\theta}$ & Vector containing all unknown parameters $\mathbf{h}, g, \boldsymbol{\gamma}$ \\
$\mathbf{C}$ & Hybrid Cram\'{e}r-Rao bound matrix, eq. \eqref{eq:CRB_Fisher} \\
$\mathbf{J}(\boldsymbol{\theta})$ & Fisher information matrix, eq. \eqref{eq:Fisher:full} \\
$\mathbf{J}(\boldsymbol{\gamma})$ & Effective Fisher information matrix, eq. \eqref{eq:Fisher}\\
$\mathbf{S}_\text{n}$ & Upper-left block of $\mathbf{J}(\boldsymbol{\gamma})$, eq. \eqref{eq:partition}\\
$\mathbf{S}_\text{p}$ & Lower-right block of $\mathbf{J}(\boldsymbol{\gamma})$, eq. \eqref{eq:partition} \\
$K_r$ & Number of radars \\
$\mathbf{J}_r(\boldsymbol{\gamma})$ & Fisher information matrix of $r$-th radar, eq. \eqref{eq:Fisher:diversity} \\
$\mathbf{p}_r, \mathring{d}_r, \ldots$ & Parameters related to the $r$-th radar \\
$\mathbf{T}, \mathbf{T}_{i,j}$ & Matrices of the approximated EFIM, eq. \eqref{eq:large_distance} \\
[1ex]
\hline
\end{tabular}
\label{table:symbols}
\end{table}
	
\subsubsection*{Notation}

Boldface lower-case and upper-case letters refer to vectors and matrices, respectively, while roman letters (both lower- and upper-case) are used for scalar variables.
$\mathring{x}$ and $\mathring{\mathbf{x}}$/$\mathring{\mathbf{X}}$ are scalar or vector/matrix quantities referred to the center of the target contour, relative to the corresponding variables $x$ and $\mathbf{x}/\mathbf{X}$ on each point along the contour.
$\mathbb{R}$ is the set of 
real numbers, and $\mathbb{R}^{n\times m}$ is the Euclidean space of $(n\times m)$-dimensional 
real matrices (or vectors if $m=1$). 
$\mathrm{j} = \sqrt{-1}$ is the imaginary unit. $\Re(\cdot)$ and $\Im(\cdot)$ denote the real and imaginary parts of the complex argument (with parentheses often omitted). $|z|$ and $z^*$ denote the modulus and the complex conjugate of the complex number $z$, respectively. 
$\| \cdot \|$ is the Euclidean norm of a vector and $\odot$ denotes the Hadamard (element-wise) product between two vectors.
 $(\cdot)^\mathsf{T}$, $(\cdot)^\mathsf{H}$, and $(\cdot)^{-1}$  denote the transpose, transpose conjugate (Hermitian), and inverse of a matrix, respectively. The Little-O notation \cite{balcazar1989nonuniform} $\mathbf{X} + o(x)$ applies entry-wise to the matrix $\mathbf{X}$.
The identity matrix is indicated by $\mathbf{I}$, and $\diag(d_1,\ldots,d_n)$ represents a diagonal matrix with elements given by the argument variables.
$\mathbf{A} \succcurlyeq \mathbf{B}$ indicates that the  matrix $\mathbf{A} -\mathbf{B}$ is positive semi-definite.  $\bigtriangleup_{\mathbf{x}_1}^{\mathbf{x}_2} f \triangleq \frac{\partial }{\partial\mathbf{x}_1} \frac{\partial }{\partial\mathbf{x}_2^\mathsf{T}} f$ with $\mathbf{x}_1$ and $\mathbf{x}_2$  two arbitrary  vectors.
$\mathbf{x}_\perp = \left(\begin{smallmatrix}0 & -1 \\ 1 & 0\end{smallmatrix}\right) \mathbf{x}$ for any $\mathbf{x}\in\mathbb{R}^{2\times 1}$.
$\mathbb{E}[\cdot ]$ is the statistical expectation operator and $x \sim \mathcal{CN}(\mu,\sigma^2)$ defines a circularly symmetric complex normal random variable $x$ with mean $\mu$ and variance $\sigma^2$.
The function $\arctan(y,x)$ denotes the four-quadrant inverse tangent.
A bar over a function denotes the rectifier operator defined as $\overline{f}=\max (f,0)$.  A dot over a scalar or vector variable, i.e., $\dot{x}/\dot{\mathbf{x}}$, refers  to the derivative (or gradient) with respect to the inherent scalar independent variable (time or angle).
We also define (see Sec.~\ref{sec::EFIM}) the \emph{star product} as a suitable inner product $\langle f_1,f_2\rangle_\star$ between two $L^2([0,2\pi])$  functions, with induced  norm $\|f\|_\star$ and orthogonal projection of $f_1$ over $f_2$ defined as $\operatorname{P}_{f_2}(f_1) \triangleq {\langle f_1,f_2\rangle_\star} {\langle f_2,f_2\rangle_\star^{-1}} f_2$, while the projection on the complement space is $\operatorname{P}_{f_2}^\perp(f_1) \triangleq f_1 -\operatorname{P}_{f_2}(f_1)$. 
	For convenience, the star product, norm, and projections are also overloaded for vector functions.
	Furthermore, in Sec.~\ref{sec::HCRB_unknown} the star product is extended over the space $L^2([0,2\pi])\times L^2([0,2\pi])$ as
	$\langle(f_1,f_2),(g_1,g_2)\rangle_\star \triangleq \langle f_1,g_1 \rangle_\star +\langle f_2,g_2 \rangle_\star$, and similarly overloaded to vector functions. In Table \ref{table:symbols}, we report a list of the main symbols used throughout the paper.

	\section{Signal Model} \label{sec:signal_model}

	\subsection{Extended-Target Contour Model}\label{sec::contourmodel}

	\begin{figure}
		\includegraphics[width=\columnwidth]{./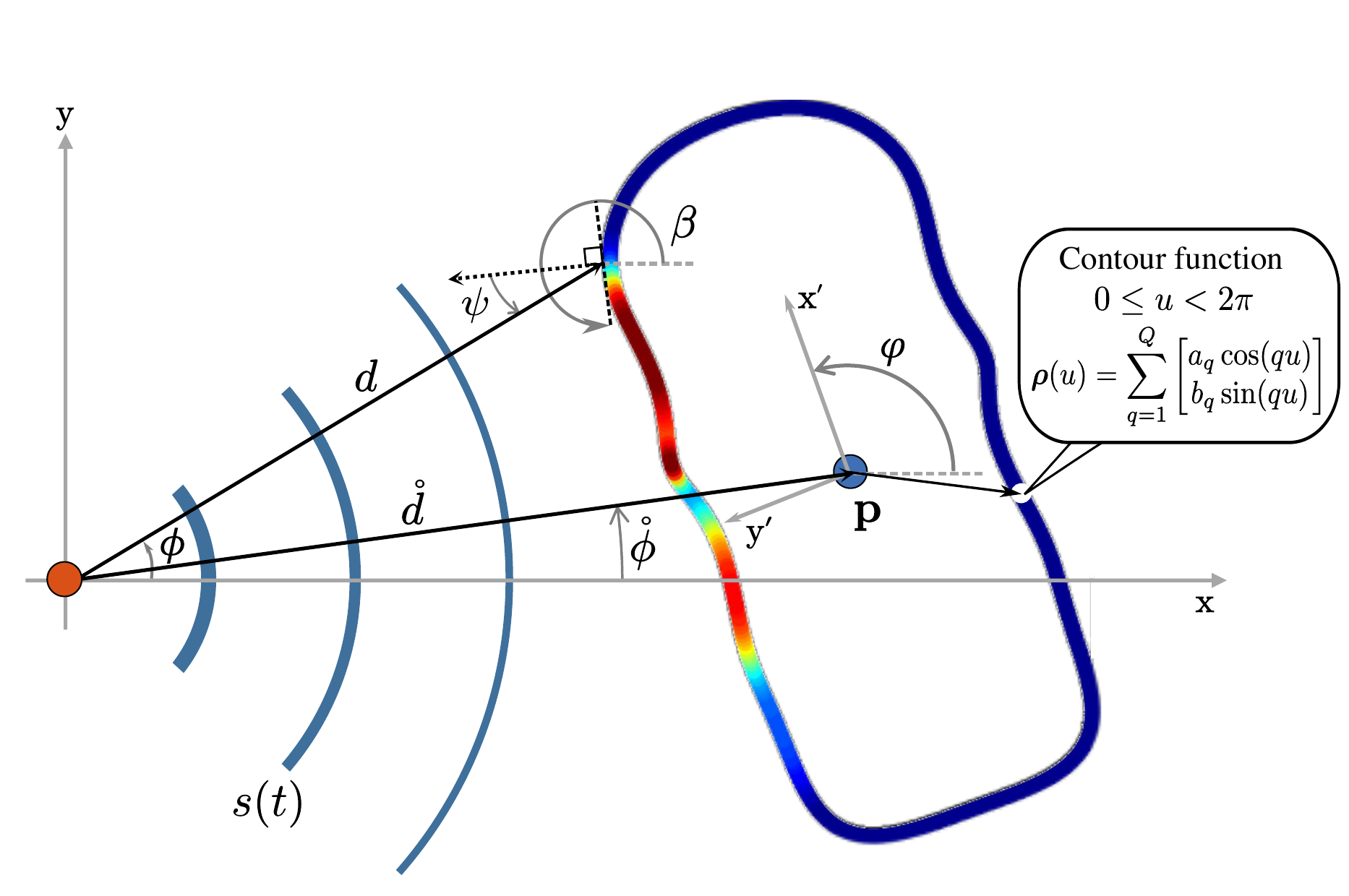}
		\caption{\ed Top view of an extended-target vehicle with orientation $\varphi = \SI{120}{\degree}$ and contour parameters $\mathbf{m} = [2.05,-0.02,0.17,0.05,-0.03,-0.01,-0.02,0.03,-0.01,-0.01]^\mathsf{T}$ and $\mathbf{n} = [1.12,0.005,0.24,-0.01,0.05,0.01,-0.01,-0.02,-0.02,0.014]^\mathsf{T}$. Colors indicate the average strength of the backscattered signal, from red (maximum strength) to blue (no energy reflected) based on $(\cosTrunc \psi)^{2(\alpha+1)}$ and $\alpha=2$. The radar is at the origin of the global coordinate system.}
		\label{fig:geometric_model}
	\end{figure}

	We preliminarily introduce an analytical model to describe a generic vehicle contour $\mathcal{C}$. As depicted in Fig.~\ref{fig:geometric_model}, we consider a birds-eye-view in a target local coordinate system where the vehicle position is assumed in the origin and the heading of the vehicle is aligned with the local $x$-axis. The two components of the contour are parameterized by a truncated Fourier series with $Q$ coefficients \cite{staib1989parametrically}. Specifically, the contour curve projected on the $x$-axis only includes cosine harmonics and the projection on the $y$-axis only includes sine harmonics because the vehicle's heading is in the direction of the positive $x$-axis and the contour is assumed symmetric with respect to the $x$-axis. Hence the perimeter is generated, for $0\leq u<2\pi$, by
	\begin{equation} \label{eq:perim}
	\boldsymbol{\rho}(u) = \sum_{q=1}^{Q} 
	\begin{bmatrix}
	a_q \cos(qu) \\ b_q \sin(qu)
	\end{bmatrix}
	=\begin{bmatrix}
	\boldsymbol{\sigma}^\mathsf{T} \mathbf{m} \\ \boldsymbol{\varsigma}^\mathsf{T} \mathbf{n}
	\end{bmatrix},
	\end{equation}
	where $\boldsymbol{\sigma}\triangleq[\cos(u)\,\cdots\, \cos(Qu)]^\mathsf{T}$, $\boldsymbol{\varsigma}\triangleq[\sin(u)\,\cdots\, \sin(Qu)]^\mathsf{T}$, $\mathbf{m}\triangleq[a_1\,\cdots\, a_Q]^\mathsf{T}$ and $\mathbf{n}\triangleq[b_1\,\cdots\, b_Q]^\mathsf{T}$. The 0-th order harmonics have been omitted because the center of the vehicle is assumed to lie at the origin of the vehicle's local reference frame. 
	Without loss of generality, the first order harmonics are assumed to satisfy $a_1,b_1>0$ so that the contour cycles anti-clockwise. We highlight that the proposed model, through the use of a truncated Fourier series,  naturally incorporates a symmetric structure for a target, and can be thus used to describe arbitrary (symmetric) vehicle shapes with only a small set of coefficients according to \eqref{eq:perim}. The number of coefficients $Q$ determines the granularity of the target contour description. If one is interested in capturing even the finest details, then $Q$ should be large enough. Conversely, when fine-grained details are not of interest, or the available bandwidth is not large enough to observe them, $Q$ can be reasonably small.

 	By assuming that the radar is centered in the origin of the global coordinate system $Oxy$ as  in Fig.~\ref{fig:geometric_model}, a displacement of the vehicle to a given position $\mathbf{p}$ can be expressed as
	\begin{equation}
	    \mathbf{p}=[\mathring{d}\cos \mathring{\phi}\ \, \mathring{d}\sin \mathring{\phi}]^\mathsf{T} \label{eq:p}
	\end{equation} 
	where $\mathring{d}$ is the distance between the radar and the center of the vehicle while $\mathring{\phi}$ is the corresponding direction (ref. Fig.~ \ref{fig:geometric_model}). Considering a heading (orientation) $\varphi$ with respect to the $x$-axis, the contour model is modified as
	\begin{equation} \label{eq:perimeter}
	\mathbf{r}(u) = \mathbf{p} + \mathbf{R} \boldsymbol{\rho}(u),
	\end{equation}
	where $\mathbf{r}(u)=[r_x(u)\ r_y(u)]^\mathsf{T}$ as a function of $u$ describes the vehicle perimeter in the global coordinate system, and 
	\begin{equation}\label{eq:rotation}
	\mathbf{R} = \begin{bmatrix} \cos\varphi & -\sin\varphi \\ \sin\varphi & \cos\varphi \end{bmatrix}
	\end{equation} 
	is the rotation matrix. 
As a consequence, we finally define the vehicle contour as $\mathcal{C}=\left\{\mathbf{r}(u) : 0\leq u<2\pi \right\}$.

\subsection{Power Profile}

We consider a radar broadcasting a signal (e.g., a chirp sequence) through a single antenna, 
whose returns are captured by a generic receive antenna array and digitally processed. In the following, in order to derive a model for the received signal based on the I/Q samples, we preliminarily devise analytical expressions for the received power.
	Let $\sqrt{E_\text{tx}}s(t)$ denote the transmitted waveform and, without loss of generality, let $\int|s(t)|^2\mathrm{d}t=1$. 
	Let $\Upsilon(\phi)$ be the radar antenna directivity towards the azimuth direction $\phi$, and let $\psi$ be the angle  between the normal vector to the vehicle surface and the line-of-sight (LOS) direction of the radar as shown in Fig.~\ref{fig:geometric_model}. Notice that $\Upsilon(\phi)$ is not specifically linked to a given type/model of automotive radar, but is used as a generic function representing any possible antenna directivity.
	The radar illuminates the entire target vehicle but, for the sake of the analysis, we focus on the azimuthal domain only\footnote{We address the problem in 2D, a common simplifying choice in the automotive radar literature that is tantamount to considering waves that propagate horizontally. This is realistic when the radar-target distance is large compared to the height of the antennas. Moreover, automotive radars are typically designed with a sufficiently wide fan beam so as to capture backscatter signals coming from targets with different elevations in their field of view.
Though the 3D case is outside the scope of the present contribution, nonetheless the proposed methodology can be extended to deal with such a case, provided that a 2D array is used in place of a ULA and the elevation angle is introduced in addition to the azimuth in the considered models.}. Moreover,  the transmitter antenna and receiver  array are assumed co-located and perfectly decoupled, so that the receiver does not suffer from self-interference due to full-duplex operation.
	Then, the differential received power $\mathrm{d}P(t)$ reflected by an infinitesimal part of the vehicle's contour $\mathcal{C}$ is given by \cite{kulmer2018impact}
	\begin{multline}
	\mathrm{d}P(t) \propto E_\text{tx} \left|s\left(t-\frac{2d(u)}{c}\right)\right|^2 \Upsilon^2(\phi(u))\\ 
	\times	\frac{\cosTrunc\psi(u)}{d^2(u)} \frac{\mathrm{d}u \cosTrunc\psi(u)}{d^2(u)} \frac{(1+\cos 2\psi(u))^{\alpha}}{2^\alpha (\alpha+1)}
		\label{eq:dP}
	\end{multline}
where we have introduced the modified cosine function 
	$ 
	\cosTrunc x \triangleq
	\max(\cos x,0)
$
to enforce the  assumption that no reflection occurs on the non-visible parts of the vehicle.
 It is worth noting that some of the parameters are a function of the reflection point $\mathbf{r}$, and consequently a function of $u$, i.e.
	\begin{subequations} \label{eq:paremeters}
		\begin{align}
		d(u) &= \|\mathbf{r}(u)\| \label{eq::defdu}\\
		\phi(u) &= \arctan(r_y(u),r_x(u))\label{eq::defphiu} \\
		\psi(u) &= 3\pi/2+\phi(u)-\beta(u), \label{eq:paremeters:theta}
		\end{align}
	\end{subequations}
	where $\beta(u) = \arctan(\dot{r}_y(u),\dot{r}_x(u))$
 and $\dot{\mathbf{r}}(u)=[\dot{r}_x(u)\ \dot{r}_y(u)]^\mathsf{T}$ with $\dot{r}_x(u) =  \dot{\boldsymbol{\sigma}}^\mathsf{T} \mathbf{m} \cos\varphi -  \dot{\boldsymbol{\varsigma}}^\mathsf{T}\mathbf{n} \sin\varphi$ and $\dot{r}_y(u) =  \dot{\boldsymbol{\sigma}}^\mathsf{T} \mathbf{m} \sin\varphi +  \dot{\boldsymbol{\varsigma}}^\mathsf{T}\mathbf{n}\cos\varphi$ (where the dot operator denotes in this case the derivative with respect to $u$). For brevity, the dependency on $u$ will be often omitted.
	
	The first part of \eqref{eq:dP}  accounts for the transmitted power at a given instant $t$ and the directivity of the antenna, with $d(u)$ given in \eqref{eq::defdu} denoting the distance between the radar and the point $\mathbf{r}(u)$ on the vehicle contour. The second part models the vehicle scattering as a function of the angles and the known surface roughness $\alpha$ \cite{kulmer2018impact}, and $ 2^\alpha (\alpha+1)$ is a normalizing factor that makes the total reflected power independent of $\alpha$, i.e., $[ 2^\alpha (\alpha+1)]^{-1}\int_{-\pi/2}^{\pi/2}(1+\cos2\psi)^{\alpha}\mathrm{d}\psi$ is a constant.  As an example, if the vehicle surface is completely reflective ($\alpha\rightarrow\infty$), the term $(1+\cos 2\psi)^{\alpha}/(2^\alpha (\alpha+1))$ becomes proportional to $\delta(\psi)$ meaning that all the power is reflected back, and if $\alpha=0$, the term implies isotropic scattering. 
Using the trigonometric identity $\cos 2\psi = 2\cos^2 \psi -1$ and $2\cos^2 \psi = 2\cosTrunc^2 \psi$ for $\psi \in [-\pi/2, \pi/2]$ (outside this interval the value of $1+\cos 2\psi$ is irrelevant being the two terms $\cosTrunc \psi$ in \eqref{eq:dP} zero), we can rewrite \eqref{eq:dP} in a more compact form as
	\begin{equation} \label{eq:rx_power}
	\mathrm{d}P(t) \propto \Upsilon^2(\phi) \frac{E_\text{tx}}{d^4} \frac{(\cosTrunc \psi)^{2(\alpha+1)}}{(\alpha+1)} \, \left|s\left(t-\frac{2d}{c}\right)\right|^2\mathrm{d}u
	\end{equation}
	which will be exploited and developed in the next sections.

	\subsection{Received Signal} \label{sec:received_signal}
	
	To develop a coherent yet tractable signal model, we make the assumptions that i) the target vehicle is in the field-of-view of the radar, and ii) the target vehicle is in the far-field of the radar's array because the near-field at mmWave frequencies is usually less than a meter.\footnote{The Fraunhofer distance\cite{std1979antennas} defines the beginning of the far field and its formula is $d_\text{F}=2(\text{array diameter})^2/\lambda$. For instance, the Fraunhofer distance for a ULA of 10 antennas with half-wavelength inter-antenna spacing operating at \SI{122}{\giga\hertz} \cite{jaeschke2014high} is \SI{12}{\centi\meter}.}
	The aggregated power at the radar's antenna, due to the reflections along the vehicle contour, is assumed to be the superposition of independent paths originated from each infinitesimal element $\mathrm{d} \mathbf{r}$, i.e., $P(t) \triangleq \int_{\mathcal{C}}\mathrm{d}P$ can be written as a line integral 
	\begin{equation} \label{eq:aggregated_power}
	P(t) = 
	\tilde{G} \int_{\mathcal{C}} \Upsilon^2(\phi) \frac{E_\text{tx}}{d^4} (\cosTrunc \psi)^{2(\alpha+1)} \left|s\left(t-\frac{2d}{c}\right)\right|^2  \mathrm{d} \mathbf{r}
	\end{equation}
	where $\tilde{G}$ accounts for the receiver’s unknown gain and other constants not included in \eqref{eq:dP} or \eqref{eq:rx_power}.
	If the vehicle size is small compared to its range, then range and antenna element gain are approximately constant along the contour, i.e., $\forall u_1,u_2\in[0,2\pi]$, $d^{-4}(u_1)\approx d^{-4}(u_2) \approx \mathring{d}^{-4} $  and $\Upsilon^2(\phi(u_1))\approx \Upsilon^2(\phi(u_2))$, meaning that the specific shape of the radiation pattern does not have a significant impact. Conversely, we do not apply such an approximation to the delay of the baseband waveform in order to keep considering the dependency between  the distance $d$ and each point along the target contour. Indeed, the variations experienced by the delay of the baseband waveform as a function of $d$ along $\mathcal{C}$ bring the necessary information for estimating the distance.
	Accordingly, the aggregated power \eqref{eq:aggregated_power} from reflections along the contour simplifies to
	\begin{equation} \label{eq:aggregated_power:2}
	P(t) = \frac{G}{\mathring{d}^4} \int_{\mathcal{C}} (\cosTrunc \psi)^{2(\alpha+1)} \left|s\left(t-\frac{2d}{c}\right)\right|^2  \mathrm{d}\mathbf{r}
	\end{equation}
	where the unknown gain $G$ absorbed all the constants that act as scaling factors, hence are irrelevant to our analysis.
	
	Let $\ell_\text{T}$ be the total perimeter of the vehicle contour
	and for simplicity assume $\ell_\text{R}$ is a divisor of $\ell_\text{T}$. 
	Upon splitting the vehicle contour into $K\triangleq \ell_\text{T}/\ell_\text{R}$ disjoint continuous sections of length $\ell_\text{R}$, i.e., $\mathcal{C} = \bigsqcup_{k=1}^K \mathcal{C}_k$ with $\mathcal{C}_k \triangleq \left\{\mathbf{r}(u_k) : \tilde{u}_{k-1}\leq u_k<\tilde{u}_k \right\}$ and $0=\tilde{u}_0,\tilde{u}_1,\ldots,\tilde{u}_K=2\pi$ defining a partition of the interval $[0, 2\pi]$,
	the instantaneous power \eqref{eq:aggregated_power:2} can be approximated as a sum
	\begin{align}
	P(t) & = \frac{G}{\mathring{d}^4} \sum_{k=1}^{K}  \int_{\mathcal{C}_k}  (\cosTrunc \psi)^{2(\alpha+1)}  \left|s \!\left( t -  \frac{2d}{c}\right) \right|^2  \mathrm{d}\mathbf{r}_k \nonumber \\
	& \approx  \frac{G\ell_\text{R}}{\mathring{d}^4}
	\sum_{k=1}^{K}  P_k(t)  \label{eq:aggregated_power:sum}
	\end{align}
	where $\mathbf{r}_k \triangleq \mathbf{r}(u_k) \in\mathcal{C}_k$ for all $k$ and by treating the arguments of each integral over the contour sections $\mathcal{C}_k$ in the above expression as approximately constant with respect to the integration variable $u$, namely $(\cosTrunc \psi)^{2(\alpha + 1)} \approx (\cosTrunc \psi_k)^{2(\alpha + 1)}$ and $s(t-\frac{2d}{c}) \approx s(t - \frac{2d_k}{c})$, we have that
	\begin{equation}
	P_k(t) = (\cosTrunc \psi_k)^{2(\alpha+1)}
	\left|s\left(t-\frac{2d_k}{c}\right)\right|^2 \label{eq:Pk}
	\end{equation}
	where we abbreviated $d_k = d(u_k)$ and $\psi_k = \psi(u_k)$.
	Notice that in general $\int_\mathcal{C} (\cdot) \mathrm{d}\mathbf{r} = \int_0^{2\pi} (\cdot) \|\dot{\mathbf{r}}\| \mathrm{d}u$ hence likewise $\int_{\mathcal{C}_k} (\cdot) \mathrm{d}\mathbf{r}_k = \int_{\tilde{u}_{k-1}}^{\tilde{u}_k} (\cdot) \|\dot{\mathbf{r}}_k\| \mathrm{d}u_k$. 
	
	Starting from the instantaneous power, our goal is now to provide a model of the  signal received at the radar.
	If the electrical signal were modeled deterministically as a function of $u$, then one would simply take the square root of \eqref{eq:aggregated_power:sum}, but this would result in an intractable model. Exploiting the fact that signals originated from sufficiently far apart points (i.e., $\ell_\text{R}\gg\lambda$, where $\lambda$ is the signal wavelength) on the vehicle can be approximately treated as uncorrelated, a more convenient choice is to model the electrical signal $e(t)$ as a sum of independently random Rayleigh paths $e_k(t)$, i.e.,
	\begin{equation}
	e(t) =
	\frac{\sqrt{G \ell_\text{R}}}{\mathring{d}^2}
	\sum_{k=1}^{K} e_k(t)
	\end{equation}
	where
	$e_k(t)$ is obtained from \eqref{eq:Pk} as
	\begin{equation}
	e_k(t) = h_k (\sinTrunc \left(\phi_k-\beta_k\right))^{\alpha+1} s\! \left(t-\frac{2d_k}{c}\right)\end{equation}
	with
	$h_k\sim\mathcal{CN}(0,1)$ and $\E[|e_k(t)|^2]=P_k(t)$, upon defining 
	$ 
	\sinTrunc x  
	\triangleq
	\max(\sin x ,0)
$
	so, by \eqref{eq:paremeters:theta},  $\cosTrunc(\psi_k) = \cosTrunc(3\pi/2+\phi_k-\beta_k) = \sinTrunc(\phi_k-\beta_k)$.

	All the derivations provided so far are valid for any arbitrary receive antenna at the radar side. The same expressions can be extended to the case of a radar receiving the backscattered signal with an antenna array. Specifically, if the target vehicle is in the far field and we assume that the radar is equipped with a uniform linear array (ULA) whose broadside points towards the positive $x$-axis, the array response to an incoming signal from azimuth $\phi$ is given by $\mathbf{a}(\phi) = [1\ \exp(-\mathrm{j}\pi\sin\phi)\ \cdots \ \exp(-\mathrm{j}\pi(N-1)\sin\phi)]^\mathsf{T}$, where $N$ is the number of antenna elements spaced a half-wavelength apart,  and we have used the usual narrowband assumption (i.e., $B \ll f_c$) which is reasonable due to the high carrier frequencies adopted by high-resolution automotive radars. Assuming that reflections originated within the same section result in approximately the same array response, then, the signal across the antenna elements of the phased array can be expressed as
	\begin{equation} \label{eq:received_signal}
	\mathbf{e}(t) =
	g\sqrt{\ell_\text{R}}
	\sum_{k=1}^{K} h_k
	\mathbf{a}(\phi_k) (\sinTrunc \left(\phi_k-\beta_k\right))^{\alpha+1} s\left(t-\frac{2d_k}{c}\right)
	\end{equation}
	where $g \triangleq \sqrt{G}/\mathring{d}^2$ and $\ell_\text{R}$ are left explicit for the convenience of the next section. 

	\section{Hybrid Cram\'{e}r-Rao Bound} \label{sec:HCRB}

	\subsection{ Fisher Information Matrix}
	
	The received signal can be expressed as $\mathbf{y}(t) = \mathbf{e}(t,\boldsymbol{\theta}) + \mathbf{w}(t)$, with $\mathbf{w}(t)$ denoting the additive white Gaussian noise having power spectral density  $N_0$ and $\mathbf{e}(t,\boldsymbol{\theta})$ from \eqref{eq:received_signal} having effective bandwidth $B_\text{RMS}$,
where the explicit dependence on the unknown vector $\boldsymbol{\theta}$ has been highlighted. 
	The unknown vector can be split into three parts as $\boldsymbol{\theta} \triangleq
	[
	\mathbf{h}^\mathsf{T} \, g \ \boldsymbol{\gamma}^\mathsf{T}
	]^\mathsf{T}$, where $\mathbf{h} \in \mathbb{R}^{2K \times 1}$ is a vector of nuisance random channel parameters,  $g$ is a deterministic nuisance channel parameter (ignoring its dependency on $\mathring{d}$), and $\boldsymbol{\gamma} \in \mathbb{R}^{(2Q + 3) \times 1}$ is the vector containing the deterministic parameters of interest, i.e.
	\begin{align}
	\mathbf{h} &\triangleq
	\begin{bmatrix}
	\Re h_1 & \Im h_1 & \cdots & \Re h_K & \Im h_K
	\end{bmatrix} ^\mathsf{T} \label{eq::h} \\
	\boldsymbol{\gamma} &\triangleq
	\begin{bmatrix}
	\mathring{d} & \mathring{\phi} & \varphi & a_1 & \cdots & a_Q & b_1 & \cdots & b_Q
	\end{bmatrix}^\mathsf{T}. \label{eq::gamma}
	\end{align}
	Ignoring terms that do not depend on $\boldsymbol{\theta}$, {\ed the log-likelihood function of the measurements is given (up to irrelevant additive constant terms) by
	\begin{equation} \label{eq:log-likelihood}
	\log p(\mathbf{y} |\boldsymbol{\theta})  =
	\frac{2}{N_0} \Re \int_{T} \mathbf{y}^\mathsf{H}(t) \mathbf{e}(t,\boldsymbol{\theta}) \,\mathrm{d}t
	-\frac{1}{N_0} \int_{T} \left\|\mathbf{e}(t,\boldsymbol{\theta}) \right\|^2 \mathrm{d}t
	\end{equation}}
	where $T$ is the observation interval.
	The HCRB on all parameters is obtained by inverting the Fisher information matrix (FIM) \cite{tichavsky1998posterior}
	\begin{align}
	\operatorname{Cov}\left(\boldsymbol{\theta}\right) &\succcurlyeq \mathbf{C} \label{eq:cov_SD}\\
	\mathbf{C} &\triangleq \mathbf{J}^{-1}\left(\boldsymbol{\theta}\right) \label{eq:CRB_Fisher}
	\end{align}
	 and the FIM is given by
	\begin{equation} \label{eq:Fisher:full}
	\mathbf{J}\left(\boldsymbol{\theta}\right) =
	-\E \left[ \bigtriangleup_{\boldsymbol{\theta}}^{\boldsymbol{\theta}} \log 
	p\left(\mathbf{y}, \mathbf{h} | g, \boldsymbol{\gamma}\right) \right]
	\end{equation}
	where $p(\mathbf{y}, \mathbf{h} | g, \boldsymbol{\gamma})$ is the joint a posteriori probability density function of the radar measurement. 
	
	
	
	By resorting to the Bayes theorem, we get
	\cite[Ch.~2.4.3]{van2004detection}
	\begin{align} \label{eq:Fisher_with_Bayes}
	\mathbf{J}\left(\boldsymbol{\theta}\right) &=
	-\E \left[ \bigtriangleup_{\boldsymbol{\theta}}^{\boldsymbol{\theta}} \log 
	p\left(\mathbf{y} | \mathbf{h}, g, \boldsymbol{\gamma}\right) \right]
	-\E \left[\bigtriangleup_{\boldsymbol{\theta}}^{\boldsymbol{\theta}} \log 
	p\left(\mathbf{h} | g, \boldsymbol{\gamma}\right) \right] \nonumber \\
	&= -\E \left[ \bigtriangleup_{\boldsymbol{\theta}}^{\boldsymbol{\theta}} \log 
	p\left(\mathbf{y} | \boldsymbol{\theta} \right) \right]
	-\E \left[\bigtriangleup_{\boldsymbol{\theta}}^{\boldsymbol{\theta}} \log 
	p\left(\mathbf{h} \right) \right].
	\end{align}
	For Gaussian observations, the expected second-order derivatives of the log-likelihood function are
	\begin{equation} \label{eq:log-likelihood_obs:derivatives}
	-\E \left[\bigtriangleup_{\mathbf{x}_1}^{\mathbf{x}_2} \log p(\mathbf{y} |\boldsymbol{\theta}) \right]
	= \frac{2}{N_0} \Re \int_{T} \E \left[ \frac{\partial\mathbf{e}^\mathsf{H}}{\partial \mathbf{x}_1}
	\frac{\partial\mathbf{e}}{\partial \mathbf{x}_2^\mathsf{T}} \right] \mathrm{d}t.
	\end{equation}
	We find that $\E \left[\bigtriangleup_{g}^{\mathbf{h}} \log p(\mathbf{y} |\boldsymbol{\theta})\right] = \mathbf{0}$ and $\E \left[\bigtriangleup_{\boldsymbol{\gamma}}^{\mathbf{h}} \log p(\mathbf{y} |\boldsymbol{\theta})\right] = \mathbf{0}$ because $\E \left[ h_k \right] = 0$; moreover, $\bigtriangleup_{g}^{\mathbf{h}} p(\mathbf{h}) = \mathbf{0}$ and $\bigtriangleup_{\boldsymbol{\gamma}}^{\mathbf{h}} p(\mathbf{h}) = \mathbf{0}$ since $\mathbf{h}$ is the only random vector of parameters.  
	Thus, the FIM can be partitioned as
	\begin{equation}
	\mathbf{J}\left(\boldsymbol{\theta}\right)	=
	\begin{bmatrix}
	\mathbf{S}_\text{n} & \mathbf{0} & \mathbf{0} \\
	\mathbf{0} & s_{11} & \mathbf{s}_{21}^\mathsf{T} \\
	\mathbf{0} & \mathbf{s}_{21} & \mathbf{S}_{22}
	\end{bmatrix}\label{eq:partition}
	\end{equation}
	where the upper-left block $\mathbf{S}_\text{n}$ depends solely on the  nuisance parameters, while the lower-right block of size $(2Q +4)\times(2Q+4)$, which refers to all the nonzero terms except for the $\mathbf{S}_\text{n}$ block, is related to the parameters of interest $\boldsymbol{\gamma}$. In the following, we denote such a lower-right block as $\mathbf{S}_\text{p}$ and focus on the part of $\mathbf{C}$ related to $\boldsymbol{\gamma}$.
	 We start by overloading the notation in \eqref{eq:CRB_Fisher} and refer to such a submatrix as $\mathbf{C}(\boldsymbol{\gamma})$. By the block matrix inversion formula:
	\begin{align}
	\mathbf{C}(\boldsymbol{\gamma}) &\triangleq \mathbf{J}^{-1}(\boldsymbol{\gamma}) \\
	\mathbf{J}(\boldsymbol{\gamma}) &= \mathbf{S}_{22} -\frac{1}{s_{11}} \mathbf{s}_{21} \mathbf{s}_{21}^\mathsf{T} \label{eq:Fisher:submatrix}
	\end{align}
	where $\mathbf{S}_\text{22}$ is a matrix of size $(2Q +3)\times(2Q+3)$, and $\mathbf{J}(\boldsymbol{\gamma})$ is the effective Fisher information matrix (EFIM), whose expression arises naturally in the process of inverting the block matrix as the Schur's complement of $s_{11}$ over $\mathbf{S}_\text{p}$.
	
	As a final remark, it is worth noting that the use of the HCRB tool together with a prior information on channel parameters (i.e., $h_k, k=1,\ldots,K$ are i.i.d. complex normal variables with zero mean and unit variance) leads to the more convenient partitioning of the FIM matrix in \eqref{eq:partition}, where the cross-correlation (off-diagonal) terms are zero and $\mathbf{J}(\boldsymbol{\gamma})$ does not depend on the specific realization of the random variables $h_k$.\footnote{The division of the whole target contour into $K$ disjoint segments is actually a convenient mathematical expedient that is used, together with the prior information on the channel parameters $h_k$ and the definition of HCRB, to obtain the block-diagonal structure of the FIM in \eqref{eq:partition}. As shown in Appendix I-A, the advantage brought by this equivalent representation is the possibility of being reverted during the derivations, allowing us to retrieve the initial representation in terms of the entire target contour $\mathcal{C}$.} {\ed Moreover, the peculiar structure of the FIM provided in \eqref{eq:partition} 
	remains valid for any choice of the statistical distribution of the random channel parameters (including the case of arbitrarily correlated parameters), with the only condition that $\E \left[ h_k \right] = 0 \ \forall k$.}
	
	\subsection{ Effective Fisher Information Matrix}\label{sec::EFIM}
	
	In the following, we derive a closed-form expression of $\mathbf{J}(\bm{\gamma})$.
 Recalling that $\int_\mathcal{C} (\cdot) \mathrm{d}\mathbf{r} = \int_0^{2\pi} (\cdot) \|\dot{\mathbf{r}}\| \mathrm{d}u$,	we define the \emph{star product} as the inner product
	\begin{equation} \label{eq:star_product}
	\langle f_1,f_2\rangle_\star \triangleq \int_{0}^{2\pi}f_1(u)f_2(u) \left\|\dot{\mathbf{r}}(u)\right\| \,\mathrm{d}u 
	\end{equation}
	over the  $L^2([0,2\pi])$ space of real square-integrable functions on the interval $[0,2\pi]$. The star product induces the norm $\|f\|_\star \triangleq  {\langle f,f \rangle_\star^{\nicefrac{1}{2}}} = (\int_{0}^{2\pi}f^2\|\dot{\mathbf{r}}(u)\|\mathrm{d}u)^{\nicefrac{1}{2}}$, and subsequently the orthogonal projection of $f_1$ over $f_2$ is $\operatorname{P}_{f_2}(f_1) \triangleq {\langle f_1,f_2 \rangle_\star} {\langle f_2,f_2 \rangle_\star^{-1}} f_2$, while the projection on the complement space is $\operatorname{P}_{f_2}^\perp(f_1) \triangleq f_1 -\operatorname{P}_{f_2}(f_1)$. 
	For convenience, the star product is overloaded such that for two vector functions $\mathbf{f}_1(u)$ and $\mathbf{f}_2(u)$, ${\langle\mathbf{f}_1,\mathbf{f}_2 \rangle_\star} \triangleq \int_{0}^{2\pi}\mathbf{f}_1(u)\mathbf{f}_2^\mathsf{T}(u) \|\dot{\boldsymbol{r}}(u)\|\mathrm{d}u$, where the $(m,n)$ component of the resulting matrix is the inner star product between the $m$-th entry of $\mathbf{f}_1$ and the $n$-th entry of $\mathbf{f}_2$. The projection operators are also overloaded: $\operatorname{P}_{\mathbf{f}_2}(\mathbf{f}_1) \triangleq {\langle\mathbf{f}_1,\mathbf{f}_2 \rangle_\star} {\langle\mathbf{f}_2,\mathbf{f}_2 \rangle_\star^{-1}} \mathbf{f}_2$ and $\operatorname{P}_{\mathbf{f}_2}^\perp(\mathbf{f}_1) \triangleq \mathbf{f}_1 -\operatorname{P}_{\mathbf{f}_2}(\mathbf{f}_1)$.
Moreover, we define%
\begin{subequations} \label{eq:w&v}
		\begin{align}\label{eq::defw}
		w &\triangleq (\sinTrunc (\phi-\beta))^{\alpha+1} \\
		v &\triangleq (\sinTrunc (\phi-\beta))^{\alpha} \cos(\phi-\beta).
		\end{align}
	\end{subequations}
The following Theorem provides a general EFIM expression.
	
	\begin{theorem}\label{prop1}
		A closed-form formula of the EFIM \eqref{eq:Fisher:submatrix} is
		\begin{multline} \label{eq:Fisher}
		\!\!\!\!\!\!\mathbf{J}(\boldsymbol{\gamma})  = \frac{2}{\|w\|_\star^2} \frac{2 E}{N_0} 
		\Big[
		 L
		\langle w\boldsymbol{\mu},w\boldsymbol{\mu} \rangle_\star
		+ M\langle w\cos\phi\boldsymbol{\eta},w\cos\phi\boldsymbol{\eta} \rangle_\star\\
		+ (\alpha+1)^2
		\langle \operatorname{P}_w^\perp\left(v\,\boldsymbol{\xi}\right), \operatorname{P}_w^\perp\left(v\,\boldsymbol{\xi}\right) \rangle_\star  \Big]
		\end{multline}
		where the received energy 
		\begin{equation} \label{eq:energy}
		E \triangleq \int_{T}\|\mathbf{e}(t)\|^2\mathrm{d}t \approx g^2N \|w\|_\star^2	
		\end{equation} 
		calculated from \eqref{eq:received_signal} has been recognized, 
		the constants
		\begin{align}\label{eq::defr}
		L &\triangleq (4\pi B_\text{RMS}/c)^2 \\ \label{eq::defk}
		M &\triangleq \pi^2 (N^2-1)/12,
		\end{align}
		are related to the signal bandwidth and ULA length, respectively, 
		and the expressions of the vectors $\boldsymbol{\mu}$, $\boldsymbol{\eta}$ and $\boldsymbol{\xi}$ are found in  \eqref{eq::mu}, \eqref{eq::eta} and \eqref{eq::xi}, respectively. 
	\end{theorem}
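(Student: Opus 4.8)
The plan is to evaluate the Schur complement \eqref{eq:Fisher:submatrix} directly, by computing from the Gaussian formula \eqref{eq:log-likelihood_obs:derivatives} the three blocks $\mathbf{S}_{22}$, $s_{11}$ and $\mathbf{s}_{21}$ of the partition \eqref{eq:partition}. The starting observation is that each partial derivative of $\mathbf{e}(t,\boldsymbol\theta)$ in \eqref{eq:received_signal} with respect to a component of $\boldsymbol\gamma$ splits, segment by segment, into three physically distinct contributions: a \emph{range} part from the delayed waveform $s(t-2d_k/c)$ (carrying $\dot s_k\,\partial d_k/\partial\gamma_i$), an \emph{array} part from the steering vector $\mathbf{a}(\phi_k)$ (carrying $\dot{\mathbf{a}}(\phi_k)\,\partial\phi_k/\partial\gamma_i$), and an \emph{amplitude} part from $w=(\sinTrunc(\phi_k-\beta_k))^{\alpha+1}$ (carrying $v\,\partial(\phi_k-\beta_k)/\partial\gamma_i$ with $v$ as in \eqref{eq:w&v}). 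These are precisely the three terms of \eqref{eq:Fisher}, so the bulk of the work is to show that they do not mix and to identify the resulting inner products as star products.

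First I would use the channel statistics. Since the $h_k\sim\mathcal{CN}(0,1)$ are independent and circularly symmetric, the expectation in \eqref{eq:log-likelihood_obs:derivatives} annihilates every cross term with $k\neq k'$ and every term carrying a single $h_k$, leaving a single sum over $k$ with weight $\E[|h_k|^2]=1$ (this is also what produces the block-diagonal structure \eqref{eq:partition}). Next I would carry out the temporal integrals using $\int_T|s|^2\,\mathrm{d}t=1$ and $\int_T|\dot s|^2\,\mathrm{d}t=(2\pi B_\text{RMS})^2$, which together with the factor $2/c$ from the delay produces $L$ in \eqref{eq::defr}; the key identity $\Re\int_T s^*\dot s\,\mathrm{d}t=\tfrac12\int_T\tfrac{\mathrm{d}}{\mathrm{d}t}|s|^2\,\mathrm{d}t=0$ then decouples the range part from the two angular parts. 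For the spatial inner products one has $\mathbf{a}^\mathsf{H}\mathbf{a}=N$, while $\mathbf{a}^\mathsf{H}\dot{\mathbf{a}}$ is purely imaginary and hence drops out under $\Re(\cdot)$ (decoupling the array part from the amplitude part), and $\dot{\mathbf{a}}^\mathsf{H}\dot{\mathbf{a}}$ contributes the aperture factor that, once the component confounded with the unknown gain is removed, reduces to the centered ULA second moment $M=\pi^2(N^2-1)/12$ in \eqref{eq::defk} together with the $\cos\phi$ weight. The gradients $\partial d_k/\partial\gamma_i$, $\partial\phi_k/\partial\gamma_i$ and $\partial(\phi_k-\beta_k)/\partial\gamma_i$, collected over the $2Q+3$ entries of $\boldsymbol\gamma$, are then recognized as the vectors $\boldsymbol\mu$, $\boldsymbol\eta$ and $\boldsymbol\xi$.

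With the cross terms gone, I would pass to the continuum limit: as $\ell_\text{R}\to0$ (equivalently $K\to\infty$) the Riemann sums $\ell_\text{R}\sum_k f(u_k)$ converge to line integrals $\int_\mathcal{C} f\,\mathrm{d}\mathbf{r}=\int_0^{2\pi} f\,\|\dot{\mathbf{r}}\|\,\mathrm{d}u$, i.e. to the star product \eqref{eq:star_product}; this reverts the artificial segmentation and recovers the whole-contour representation. At this point $\mathbf{S}_{22}$ is the sum of the range term $\propto L\langle w\boldsymbol\mu,w\boldsymbol\mu\rangle_\star$, the array term $\propto M\langle w\cos\phi\,\boldsymbol\eta,w\cos\phi\,\boldsymbol\eta\rangle_\star$ and an \emph{unprojected} amplitude term $\propto(\alpha+1)^2\langle v\boldsymbol\xi,v\boldsymbol\xi\rangle_\star$, while $s_{11}=2E/(N_0 g^2)$ follows from \eqref{eq:energy} and $\mathbf{s}_{21}\propto(\alpha+1)\langle v\boldsymbol\xi,w\rangle_\star$. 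The final step is the rank-one correction $-\mathbf{s}_{21}\mathbf{s}_{21}^\mathsf{T}/s_{11}$: because $\mathbf{s}_{21}$ couples to $g$ only through the amplitude piece (the range and array pieces drop out for the same temporal/spatial orthogonality reasons), the correction acts solely on the amplitude term and converts $\langle v\boldsymbol\xi,v\boldsymbol\xi\rangle_\star$ into $\langle\operatorname{P}_w^\perp(v\boldsymbol\xi),\operatorname{P}_w^\perp(v\boldsymbol\xi)\rangle_\star$, since $\operatorname{P}_w^\perp$ removes exactly the component along $w$ with coefficient $\langle\cdot,w\rangle_\star/\|w\|_\star^2$. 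Collecting the common factor $\tfrac{2}{\|w\|_\star^2}\tfrac{2E}{N_0}$ then yields \eqref{eq:Fisher}.

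I expect the main obstacle to be twofold. The heavier bookkeeping part is establishing that the three derivative pieces are mutually orthogonal under the combined temporal, spatial and statistical expectation, so that $\mathbf{S}_{22}$ is genuinely a sum of three decoupled quadratic forms and $\mathbf{s}_{21}$ sees only the amplitude piece; this is what makes the Schur complement collapse into a single projection $\operatorname{P}_w^\perp$ rather than a dense matrix inversion. The more delicate conceptual point is the continuum limit, namely justifying that the segmentation into $K$ Rayleigh paths, introduced only to diagonalize the channel block, can be reverted into the star-product line integrals without residual error (controlling the approximations made in \eqref{eq:Pk}); a closely related subtlety is the correct treatment of the steering-vector phase reference that is responsible for the \emph{centered} aperture $M$. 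The accompanying geometric differentiation of $d(u)$, $\phi(u)$ and $\beta(u)$ to obtain $\boldsymbol\mu$, $\boldsymbol\eta$ and $\boldsymbol\xi$ is routine but lengthy.
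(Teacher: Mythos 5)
Your proposal is correct and follows essentially the same route as the paper's Appendix~\ref{app:FIS}: the Gaussian FIM formula, the channel statistics $\E[h_k^*h_{k'}]=\delta_{kk'}$ to remove cross terms, the reversion of the segment sum to star-product line integrals, the chain rule through the intermediate geometric variables (your range/array/amplitude split is exactly the paper's decomposition via $d$, $\phi$, $\beta$ with $\boldsymbol{\xi}=\partial(\phi-\beta)/\partial\boldsymbol{\gamma}$), the temporal and spatial orthogonality identities that decouple the three quadratic forms, and the Schur complement over $s_{11}$ whose rank-one correction turns $\langle v\boldsymbol{\xi},v\boldsymbol{\xi}\rangle_\star$ into $\langle\operatorname{P}_w^\perp(v\boldsymbol{\xi}),\operatorname{P}_w^\perp(v\boldsymbol{\xi})\rangle_\star$. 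The only cosmetic difference is that the paper obtains the centered aperture factor $M$ by directly choosing the phase reference at the array center, whereas you argue it via removal of the component confounded with the unknown gain; these are equivalent here.
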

	
	\begin{proof}
		See Appendix~\ref{app:FIS}.
	\end{proof}
	
	Theorem 1 provides a closed-form expression of the EFIM \eqref{eq:Fisher}, composed of three terms which are functions of $u$ and are integrated over the vehicle contour through the star norm operator. In particular, the first term $ L {\langle w\boldsymbol{\mu},w\boldsymbol{\mu} \rangle_\star}$ conveys information on the vehicle state from the terms related to the distance between the radar and the target in the observed signal $\mathbf{y}(t)$, and increases proportionally with the square of the signal bandwidth, being $L\propto B_\text{RMS}^2$. The second term $M  {\langle  w\cos\phi\boldsymbol{\eta},w\cos\phi\boldsymbol{\eta} \rangle_\star}$ represents 
	 information related to the directions from which the backscatter signals, generated by reflections on the target contour, are received by the radar, and increases proportionally to the square of the array aperture, being $M \propto (N^2-1)$ where $(N-1)\lambda/2$ is the array physical length. The third and last term $(\alpha+1)^2 {\langle \operatorname{P}_w^\perp(v\boldsymbol{\xi}), \operatorname{P}_w^\perp(v\boldsymbol{\xi}) \rangle_\star}$ contributes to the information on the vehicle state via the signal-strength information associated to the energy reflected back from the target in the observation $\mathbf{y}(t)$, and depends explicitly on $\alpha$, the reflectivity of the surface. The latter parameter also impacts on the constant multiplying the  EFIM in \eqref{eq:Fisher} through the squared star norm $\|w\|_\star^2$ at the denominator.
	
	\vspace{-0.15cm}
	
	\subsection{CRB for a Point-like Target} A point-like target is a theoretical approximation of a target in which the received signal is modeled as coming from a point in space with zero extent. They are considered in the literature as valid approximations for very distant targets \cite{skolnik1981introduction}. Mathematically, keeping the notation consistent with \eqref{eq:received_signal}, the noise-free baseband signal model is
	\begin{equation} \label{eq:received_signal:point_target}
	\mathbf{e}(t) =
	g_\text{p.t.}\, \mathbf{a}(\mathring{\phi}) s\left(t-2\mathring{d}/c\right).
	\end{equation}
	The gain $g_\text{p.t.}$ models the radar cross-section of the target at the moment of the measurement and it is unpredictable in general. Parameters $\mathring{\phi}$ and $\mathring{d}$ are the direction and range, respectively, to the point-like target. The CRB for them is given by
	\begin{equation} \label{eq:Fisher:point_target}
	\mathbf{C}_\text{p.t.}(\mathring{d},\mathring{\phi}) \triangleq \mathbf{J}^{-1}_\text{p.t.}(\mathring{d},\mathring{\phi}) = \left(2\frac{E}{N_0}\right)^{-1} 
	\begin{pmatrix}
	L^{-1} & 0 \\ 0 & Z^{-1}
	\end{pmatrix}
	\end{equation}
	where $\mathbf{J}_\text{p.t.}(\mathring{d},\mathring{\phi})$ is the EFIM of a point-like target and
	\begin{equation} \label{eq:a}
	Z \triangleq M\cos^2(\mathring{\phi})
	\end{equation}
	where $M$ is given in \eqref{eq::defk}.
	The derivation of \eqref{eq:Fisher:point_target} is a much simpler version of that leading to \eqref{eq:Fisher} for the case of an extended target and is omitted for conciseness. It is very similar to that of the extended target but assuming only a single reflection point which greatly reduces the derivation.
	
	\subsection{Bounds for Multiple Radars}
	A common problem when dealing with extended targets is that the positioning accuracy is degraded since, from the radar's perspective, only a part of the target vehicle is visible. This problem is exacerbated when the contour is unknown because the radar has to infer it from scratch, also including the non-visible parts. Theoretically, it is possible to estimate the contour because targets are usually symmetric and the contour model is limited to $2Q$ coefficients. 
	To further investigate the accuracy in sensing the extended-target parameters, we extend our previous derivations to the case in which multiple radars are collaboratively available around the target. Since the direction and range of the target vehicle are relative to each radar, we derive the EFIM on the location of the vehicle in Cartesian coordinates. 
	Let $\mathbf{p}_r=[p_{x,r}\ p_{y,r}]^\mathsf{T}$ and $\kappa_r$ be the location and orientation, respectively, of radar $r$.
	Then, radar $r$ and the parameters of the target vehicle are related through $\mathbf{p} =  \mathbf{p}_r +\mathring{d}_r [\cos(\mathring{\phi}_r+\kappa_r)\ \sin(\mathring{\phi}_r+\kappa_r)]^\mathsf{T}$. By the chain rule and taking into account that the noise terms at the different radars are statistically independent, we obtain
	\begin{align}
	\mathbf{J}(p_x,p_y,&\varphi,\mathbf{m}^\mathsf{T},\mathbf{n}^\mathsf{T}) = \sum_{r=1}^{K_r}
	\mathbf{M}_r
	\mathbf{J}_r(\boldsymbol{\gamma}) 
	\mathbf{M}_r^\mathsf{T}  \label{eq:Fisher:diversity} \\
	\mathbf{M}_r &=
	\begin{bmatrix}
	\frac{\partial\mathring{d}_r}{\partial\mathbf{p}} & \frac{\partial\mathring{\phi}_r}{\partial\mathbf{p}} & \mathbf{0} \\
	\mathbf{0} & \mathbf{0} & \mathbf{I}
	\end{bmatrix} \nonumber \\
	&=
	\begin{bmatrix}
	\mathring{d}_r^{-1} \left(p_x-p_{x,r}\right) &
	\mathring{d}_r^{-2} \left(p_{y,r}-p_y\right) & \mathbf{0} \\
	\mathring{d}_r^{-1} \left(p_y-p_{y,r}\right) &
	\mathring{d}_r^{-2} \left(p_x-p_{x,r}\right) & \mathbf{0} \\
	 \mathbf{0} & \mathbf{0} & \mathbf{I}
	\end{bmatrix},
	\end{align}
	where $\mathbf{J}_r(\boldsymbol{\gamma})$ is the EFIM in \eqref{eq:Fisher} for radar $r$. Denoting by $\mathbf{C}(p_x,p_y,\varphi,\mathbf{m}^\mathsf{T},\mathbf{n}^\mathsf{T})$ the inverse of \eqref{eq:Fisher:diversity} and $C(p_x)$, $C(p_y)$ the scalar elements corresponding to the position coordinates (by the usual overloaded notation), the position error bound is defined as $\text{PEB} = [C(p_x) +C(p_y)]^{1/2}$; it represents a lower bound for the positioning accuracy of any unbiased estimator.

	\section{Asymptotic Analysis of the HCRB} \label{sec:HCRB_asymp}
	
	The expression of the exact HCRB computed in the previous section requires computing the inverse of matrix \eqref{eq:Fisher}, rendering its theoretical analysis and interpretation very challenging. In this section, an approximate expression of the HCRB for long range $\mathring{d}$ is presented. Then, the expression is further developed for the two cases of known and unknown shapes. Finally, some relationships are derived that link such results with the CRB of a point-like target.
	
	\subsection{ General result on long-range HCRB}
	
	We first give a general Theorem on the  approximation of the EFIM, useful for the subsequent derivation of the HCRB for both cases of known and unknown vehicle contour.
	
	\begin{theorem} \label{lemma:large_distance} For increasing ranges, the EFIM in \eqref{eq:Fisher} can be approximated as
		\begin{equation} \label{eq:large_distance}
		\mathbf{J}(\boldsymbol{\gamma}) = 
		2\frac{E}{N_0}
		\mathrlap{\underbrace{\phantom{\begin{bmatrix}\mathbf{T}_{11} & \mathbf{T}_{21}^\mathsf{T} \\ \mathbf{T}_{21} & \mathbf{T}_{22}	\end{bmatrix}}}_{\mathbf{T}}}
		\begin{bmatrix}
		\mathbf{T}_{11} & \mathbf{T}_{21}^\mathsf{T} \\
		\mathbf{T}_{21} & \mathbf{T}_{22}
		\end{bmatrix}
		+o\left(\mathring{d}^{-4}\right)
		\end{equation}
		where
		\begin{equation}
		\mathbf{T}_{11} = \begin{bmatrix}
		L & A & -A \\
		A & Z+B & -B \\
		-A & -B & B
		\end{bmatrix} \in \mathbb{R}^{3\times3} \label{eq:T11} \\[-0.2cm]
		\end{equation}
		\begin{align}
		A &=  L \langle\bar{w},\bar{w}\,\bar{\mathbf{p}}_\perp^\mathsf{T}\mathbf{R}\boldsymbol{\rho} \rangle_\star \\
	B &= L \left\|\bar{w}\,\bar{\mathbf{p}}_\perp^\mathsf{T}\mathbf{R}\boldsymbol{\rho}\right\|_\star^2
	+(\alpha+1)^2 \left\|\operatorname{P}_w^\perp(v)\right\|_\star^2 /\left\|w\right\|_\star^2  
		\end{align}
			where $\bar{w}=w/\|w\|_\star$ with $w$ given in \eqref{eq::defw},   $\bar{\mathbf{p}}_\perp = \mathbf{p}_\perp/\|\mathbf{p}_\perp\|_\star$ with $\mathbf{p}_\perp= \left(\begin{smallmatrix}0 & -1 \\ 1 & 0\end{smallmatrix}\right) \mathbf{p}$, $L$  given in \eqref{eq::defr},
		$Z$ given in \eqref{eq:a}, while
		$\mathbf{T}_{21} \in \mathbb{R}^{2Q\times3}$ and $\mathbf{T}_{22}\in\mathbb{R}^{2Q\times2Q}$ depend on the extended-target contour and are given in \eqref{eq::T21} and \eqref{eq::T22}, respectively.
	\end{theorem}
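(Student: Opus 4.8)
The plan is to start from the exact EFIM in \eqref{eq:Fisher} and expand each of its three star products in inverse powers of $\mathring{d}$, keeping only the leading term. The key observation is that $\mathring{d}$ enters \eqref{eq:Fisher} solely through the functions $d(u)$, $\phi(u)$, $\beta(u)$ and through the gradient vectors $\boldsymbol{\mu}$, $\boldsymbol{\eta}$, $\boldsymbol{\xi}$ (the partial derivatives of $d$, $\phi$ and $\phi-\beta$ with respect to $\boldsymbol{\gamma}$), whereas the measure $\|\dot{\mathbf{r}}\|$ of the star product \eqref{eq:star_product} is range-independent, since $\dot{\mathbf{r}} = \mathbf{R}\dot{\boldsymbol{\rho}}$ by \eqref{eq:perimeter}. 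First I would substitute $\mathbf{r}(u) = \mathbf{p} + \mathbf{R}\boldsymbol{\rho}(u)$ with $\mathbf{p} = \mathring{d}\,\hat{\mathbf{p}}$ and $\hat{\mathbf{p}} = [\cos\mathring{\phi}\ \sin\mathring{\phi}]^\mathsf{T}$, and derive the elementary geometric expansions $d(u) = \mathring{d} + \hat{\mathbf{p}}^\mathsf{T}\mathbf{R}\boldsymbol{\rho}(u) + O(\mathring{d}^{-1})$, $\phi(u) = \mathring{\phi} + \mathring{d}^{-1}\hat{\mathbf{p}}_\perp^\mathsf{T}\mathbf{R}\boldsymbol{\rho}(u) + O(\mathring{d}^{-2})$, and $\beta(u) = \varphi + \beta_0(u)$ with $\beta_0$ range-independent. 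Consequently $w$ and $v$ from \eqref{eq:w&v} converge to fixed functions of $u$ (with $\phi-\beta \to \mathring{\phi}-\varphi-\beta_0(u)$), and so do $\|w\|_\star$ and the projector $\operatorname{P}_w^\perp$, which appear in the prefactor and in $B$.

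The heart of the argument is a careful order analysis of the gradient components. Differentiating the expansions gives, for the range gradient, $\partial d/\partial\mathring{d} \to 1$, $\partial d/\partial\mathring{\phi} \to \hat{\mathbf{p}}_\perp^\mathsf{T}\mathbf{R}\boldsymbol{\rho}$ and, using $\partial(\mathbf{R}\boldsymbol{\rho})/\partial\varphi = (\mathbf{R}\boldsymbol{\rho})_\perp$ together with $\hat{\mathbf{p}}^\mathsf{T}\mathbf{x}_\perp = -\hat{\mathbf{p}}_\perp^\mathsf{T}\mathbf{x}$, the crucial antisymmetry $\partial d/\partial\varphi \to -\partial d/\partial\mathring{\phi}$. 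For the direction gradient $\boldsymbol{\eta}$, only $\partial\phi/\partial\mathring{\phi} \to 1$ survives at leading order, every other component being $O(\mathring{d}^{-1})$ or smaller; this is what collapses the $M$-term onto the single $(2,2)$ entry, where $\cos\phi \to \cos\mathring{\phi}$ produces $Z = M\cos^2\mathring{\phi}$ of \eqref{eq:a}. For the reflection-angle gradient $\boldsymbol{\xi}$ one finds $\partial(\phi-\beta)/\partial\mathring{d} \to 0$, $\partial(\phi-\beta)/\partial\mathring{\phi} \to 1$ and $\partial(\phi-\beta)/\partial\varphi \to -1$ (because $\partial\beta/\partial\varphi = 1$), giving the same antisymmetry between the $\mathring{\phi}$- and $\varphi$-columns. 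Substituting these limits into \eqref{eq:Fisher} and collecting entries then reproduces $\mathbf{T}_{11}$ in \eqref{eq:T11}: the $L$-term yields $L$ in $(1,1)$, $A$ in $(1,2)$ and $-A$ in $(1,3)$ by antisymmetry, and contributes $L\|\bar w\,\hat{\mathbf{p}}_\perp^\mathsf{T}\mathbf{R}\boldsymbol{\rho}\|_\star^2$ to the $\{\mathring{\phi},\varphi\}$ block; the third term contributes $(\alpha+1)^2\|\operatorname{P}_w^\perp(v)\|_\star^2/\|w\|_\star^2$ to the same block, the two defining $B$ with the sign pattern $+B,-B,+B$ in $(2,2),(2,3),(3,3)$; and the $M$-term adds $Z$ to $(2,2)$ alone. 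The contour-coefficient blocks $\mathbf{T}_{21}$, $\mathbf{T}_{22}$ in \eqref{eq::T21}--\eqref{eq::T22} follow identically from the $a_q$- and $b_q$-gradients $\partial\boldsymbol{\rho}/\partial a_q = [\cos qu\ 0]^\mathsf{T}$ and $\partial\boldsymbol{\rho}/\partial b_q = [0\ \sin qu]^\mathsf{T}$ of \eqref{eq:perim}.

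Finally I would assemble the error. Since $g = \sqrt{G}/\mathring{d}^2$ makes $E = O(\mathring{d}^{-4})$ in \eqref{eq:energy}, while the relative corrections to every retained gradient component are $O(\mathring{d}^{-1})$ and the discarded $M$- and cross-contributions are at most $O(\mathring{d}^{-1})$ relative, multiplying by the prefactor $2E/N_0$ converts the $o(1)$ relative remainder into an absolute $o(\mathring{d}^{-4})$, as claimed. The main obstacle I anticipate is not the leading-order bookkeeping but the rigorous control of the remainder: one must verify that the Taylor remainders of $d(u)$, $\phi(u)$ and of the induced gradients are bounded uniformly in $u\in[0,2\pi]$ (which holds because $\boldsymbol{\rho}$ and $\dot{\boldsymbol{\rho}}$ are smooth and bounded and, for $\mathring{d}$ large enough, the arguments of the $\arctan$ functions stay away from their singularities), so that integrating them against the fixed measure $\|\dot{\mathbf{r}}\|\,\mathrm{d}u$ in each star product genuinely produces an $o(\mathring{d}^{-4})$ term rather than merely a pointwise-vanishing error.
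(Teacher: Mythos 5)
Your proposal is correct and follows essentially the same route as the paper's proof: both reduce the claim to computing the $\mathring{d}\to\infty$ limits of the gradient components $\boldsymbol{\mu},\boldsymbol{\eta},\boldsymbol{\xi}$ (and of $w$, $v$, $\phi$) entry by entry, substitute these limits into the three star-product terms of \eqref{eq:Fisher}, and use $g^2\mathring{d}^4=G$ to convert the $o(1)$ relative error into the absolute $o(\mathring{d}^{-4})$ term. Your limiting values (including the antisymmetry between the $\mathring{\phi}$- and $\varphi$-columns and the collapse of the $M$-term onto the $(2,2)$ entry) agree with the paper's \eqref{eq:term1}--\eqref{eq:term3}, and your explicit attention to uniform control of the Taylor remainders is, if anything, more careful than the paper's appeal to passing the limit inside the integral.
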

	
	\begin{proof}
		See Appendix~\ref{app:large_distance}.
	\end{proof}
	
	Theorem 2 provides a closed-form approximate version of the EFIM for long ranges, which admits a convenient block-structure representation in terms of three matrices $\mathbf{T}_{11}$, $\mathbf{T}_{12}$, and $\mathbf{T}_{22}$. The entries of such matrices still depend on the extended-target contour, being some of the involved terms functions of the variable $u$, as well as on the signal bandwidth (through $L$), array aperture (through $Z$), and target reflectivity (through $B$). Moreover, all the entries of the EFIM increase proportionally to the received energy $E$, which depends on $g^2\propto\mathring{d}^{-4}$. Thus, the entries of the EFIM decay as the fourth power of the range, and as we will demonstrate in the following propositions, the HCRB increases as the fourth power of the range as in the case of a point-like target in free space \cite[Eq. (1.6)]{skolnik1981introduction}.
	Next, explicit expressions for the HCRB of the range, direction and orientation are provided. Following the nomenclature of the previous section, we overload the notation and denote the HCRB related to the three parameter of interest as $\mathbf{C}(\mathring{d},\mathring{\phi},\varphi)$.  

	\subsection{HCRB for Known Shape} \label{sub:known_contour}
	 The result of Theorem~\ref{lemma:large_distance} is now further developed for the case of known vehicle contour.
	
	\begin{proposition} \label{lemma:far_known}
		For an extended target with \emph{known contour}, i.e., vectors $\mathbf{m}$ and $\mathbf{n}$ known,
		\begin{equation} \label{eq:far_known}
		\mathbf{C}(\mathring{d},\mathring{\phi},\varphi) = \left(2\frac{E}{N_0}\right)^{-1} \mathbf{T}_{11}^{-1}
		+o\left(\mathring{d}^{4}\right)
		\end{equation}
		where
		\begin{equation} \label{eq:T11:inv}
		\mathbf{T}_{11}^{-1} \!=\!
		\begin{bmatrix}
		\left(L -\frac{A^2}{B}\right)^{-1} & 0 & -\left(A-\frac{LB}{A}\right)^{-1} \\
		0 & Z^{-1} & Z^{-1} \\
		-\left(A-\frac{LB}{A}\right)^{-1} & Z^{-1} & Z^{-1}+\left(B-\frac{A^2}{L}\right)^{-1}
		\end{bmatrix}.\!
		\end{equation}
	\end{proposition}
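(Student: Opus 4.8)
The plan is to exploit the fact that, when the contour is known, the only remaining unknowns among the parameters of interest are $\mathring{d}$, $\mathring{\phi}$, and $\varphi$, so that the relevant Fisher information is simply the leading $3\times 3$ block of the EFIM. First I would note that knowing $\mathbf{m}$ and $\mathbf{n}$ removes the $2Q$ contour coefficients from $\boldsymbol{\gamma}$; they become fixed constants rather than quantities to be estimated. Since known parameters carry no uncertainty, the Fisher information for the surviving unknowns is obtained by deleting the rows and columns associated with the contour coefficients, i.e.\ by taking the corresponding principal submatrix—no Schur complement is needed here. In the block partition of Theorem~\ref{lemma:large_distance} this amounts to retaining only the $\mathbf{T}_{11}$ block, so the EFIM for $(\mathring{d},\mathring{\phi},\varphi)$ is $2\frac{E}{N_0}\mathbf{T}_{11}+o(\mathring{d}^{-4})$.

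Next I would invert this block. Because $E\propto g^2\propto\mathring{d}^{-4}$, the leading term is of order $\mathring{d}^{-4}$ while the residual is $o(\mathring{d}^{-4})$; a first-order expansion of the matrix inverse, $(\mathbf{J}_0+\mathbf{E})^{-1}=\mathbf{J}_0^{-1}-\mathbf{J}_0^{-1}\mathbf{E}\mathbf{J}_0^{-1}+\cdots$, then turns the $o(\mathring{d}^{-4})$ perturbation into an $o(\mathring{d}^{4})$ correction on the inverse, yielding $\mathbf{C}(\mathring{d},\mathring{\phi},\varphi)=(2E/N_0)^{-1}\mathbf{T}_{11}^{-1}+o(\mathring{d}^{4})$ as in~\eqref{eq:far_known}.

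It remains only to verify the closed form~\eqref{eq:T11:inv}. Expanding $\det\mathbf{T}_{11}$ along the first row gives $\det\mathbf{T}_{11}=Z(LB-A^2)$, and computing the cofactors produces a symmetric adjugate; dividing by the determinant and simplifying (e.g.\ $B/(LB-A^2)=(L-A^2/B)^{-1}$, $A/(LB-A^2)=-(A-LB/A)^{-1}$, and $(LZ+LB-A^2)/[Z(LB-A^2)]=Z^{-1}+(B-A^2/L)^{-1}$) reproduces exactly the displayed entries. The only laborious part is this explicit $3\times 3$ inversion together with massaging the entries into the particular algebraic form of~\eqref{eq:T11:inv}; the conceptual reduction to the $\mathbf{T}_{11}$ block and the asymptotic bookkeeping are immediate once Theorem~\ref{lemma:large_distance} is in hand, so no genuine analytical obstacle remains.
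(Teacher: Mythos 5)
Your overall route coincides with the paper's: with $\mathbf{m},\mathbf{n}$ known, the EFIM collapses to the principal $3\times3$ block $\mathbf{J}_{11}=2(E/N_0)\mathbf{T}_{11}+o(\mathring{d}^{-4})$ of Theorem~\ref{lemma:large_distance} (no Schur complement, exactly as the paper argues), and your Neumann-expansion bookkeeping is an equivalent way of doing what the paper does by factoring $\mathring{d}^{-4}\bigl[\mathbf{C}-(2E/N_0)^{-1}\mathbf{T}_{11}^{-1}\bigr]$ into a product of convergent factors one of which tends to zero. Your explicit cofactor computation, with $\det\mathbf{T}_{11}=Z(LB-A^2)$, correctly reproduces \eqref{eq:T11:inv} (the paper states this inverse without showing the algebra).

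The one genuine omission is that you never establish that $\mathbf{T}_{11}$ is invertible, and every subsequent step silently depends on it: the expansion $(\mathbf{J}_0+\mathbf{E})^{-1}=\mathbf{J}_0^{-1}-\mathbf{J}_0^{-1}\mathbf{E}\mathbf{J}_0^{-1}+\cdots$ requires $\mathbf{J}_0$ nonsingular, the existence of the limits requires the same, and the displayed entries of \eqref{eq:T11:inv} divide by $Z$, $B$, $L-A^2/B$ and $A-LB/A$. The paper devotes the entire first half of its proof to precisely this point: $\det\mathbf{T}_{11}=Z(LB-A^2)\neq 0$ because (i) $Z=M\cos^2(\mathring{\phi})>0$ unless the target sits at the ULA endfire ($\mathring{\phi}=\pm\pi/2$), and (ii) $A^2<BL$ by a strict Cauchy--Schwarz inequality applied to $\langle\bar{w},\bar{w}\,\bar{\mathbf{p}}_\perp^\mathsf{T}\mathbf{R}\boldsymbol{\rho}\rangle_\star$, strict because equality would force $\mathbf{p}_\perp^\mathsf{T}\mathbf{R}\boldsymbol{\rho}(u)$ to be constant over the illuminated part of the contour, i.e.\ a target of zero width. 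Neither condition is vacuous --- the bound genuinely degenerates at endfire --- so this check must be added; with it, your argument is complete and matches the paper's.
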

	
	\begin{proof}
		See Appendix~\ref{app:known_contour}.
	\end{proof}

	Proposition 1 provides a closed-form expression for the HCRB in case of known target contour, which mainly depends on the inverse of the matrix $\mathbf{T}_{11}$. More specifically, the elements in the main diagonal of $\mathbf{T}^{-1}_{11}$ unveil the main connections between the accuracy in the estimation of $\mathring{d}$, $\mathring{\phi}$, and $\varphi$ and the fundamental radar parameters. By comparing the CRB of the point-like target \eqref{eq:Fisher:point_target} and the HCRB of the extended target with known contour in Proposition~\ref{lemma:far_known} for the same received energy, we discover that  $C(\mathring{\phi}) \approx C_\text{p.t.}(\mathring{\phi}) = (2E/N_0)^{-1}Z^{-1}$ for a sufficiently large range. On the other hand, the HCRB on the range for an extended target is larger than the CRB of the point-like target because $C(\mathring{d}) \approx (2E/N_0)^{-1}(L -A^2/B)^{-1} \geq C_\text{p.t.}(\mathring{d})$.
	By using the explicit expressions of $A$ and $B$, we find that $C(\mathring{d}) = (2E/N_0)^{-1}(L -A^2/B)^{-1}$ is bounded as
	\begin{equation} \label{eq:star_product_approx}
	\frac{N_0}{2E}L^{-1} \!\!\leq
	C(\mathring{d}) \! < \!
	\frac{N_0}{2E}L^{-1}\! \left(\!1 -\left[\frac{\langle w,w\,\mathbf{p}_\perp^\mathsf{T}\mathbf{R}\boldsymbol{\rho} \rangle_\star}{\|w\|_\star \|w\,\mathbf{p}_\perp^\mathsf{T}\mathbf{R}\boldsymbol{\rho}\|_\star}\right]^2 \right)^{-1}
	\!\!\!\!\!\!\approx  L^{-1 }
	\end{equation}
	because $\langle w,w\,\mathbf{p}_\perp^\mathsf{T}\mathbf{R}\boldsymbol{\rho} \rangle_\star \|w\|_\star^{-1} \|w\,\mathbf{p}_\perp^\mathsf{T}\mathbf{R}\boldsymbol{\rho}\|_\star^{-1} \approx 0$. The reason is that $w$ is positive by definition whereas $w\,\mathbf{p}_\perp^\mathsf{T}\mathbf{R}\boldsymbol{\rho}$ is an odd function around some point $u$, and so for most contours 
	the star-product is close to zero. Thus, $C(\mathring{d}) \approx C_\text{p.t.}(\mathring{d})$.
	The orientation is not defined for the case of a point-like target. For the extended target, we can rewrite $C(\varphi) \approx (2E/N_0)^{-1}[Z^{-1} +(B-A^2/L)^{-1}] \approx (2E/N_0)^{-1} [Z^{-1} +(L/B)(L-A^2/B)^{-1}]$, and by the same approximation, $C(\varphi) \approx (2E/N_0)^{-1} [Z^{-1} +B^{-1}]$. Hence, $C(\varphi) \approx C(\mathring{\phi}) +(2E/N_0)^{-1} B^{-1}$, whereas if we were to compute the HCRB of the relative orientation\footnote{We define the relative orientation as the heading angle of the target with respect to the line connecting the radar and the target; the absolute orientation (or simply orientation) is instead the heading angle with respect to the $x$-axis.}  (related to $\varphi$ through $\varphi = \varphi_\text{rel.} +\mathring{\phi}$), we would obtain $C(\varphi_\text{rel.}) \approx (2E/N_0)^{-1} B^{-1}$. This indicates that the variance of the absolute orientation is the sum of two variances: the direction variance and the relative orientation variance. In summary, we have proven the following result.
	
	\begin{proposition}
	For an extended target with \emph{known contour} at a sufficiently long range, the following approximate relationships hold true\vspace{-0.1cm}
	\begin{align}
	C(\mathring{d}) &\approx C_\text{p.t.}(\mathring{d}) \label{eq:HCRB:known:range} \\
	C(\mathring{\phi}) &\approx C_\text{p.t.}(\mathring{\phi}) \\
	C(\varphi) &\approx C_\text{p.t.}(\mathring{\phi}) +(2E/N_0)^{-1} B^{-1}. \label{eq:HCRB:known:orientation}
	\end{align}
	\end{proposition}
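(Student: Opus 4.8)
The plan is to read all three diagonal entries directly off the closed form in Proposition~\ref{lemma:far_known}, since by \eqref{eq:far_known} the HCRB on $(\mathring{d},\mathring{\phi},\varphi)$ equals $(2E/N_0)^{-1}\mathbf{T}_{11}^{-1}$ up to $o(\mathring{d}^4)$, and then to compare each entry of $\mathbf{T}_{11}^{-1}$ in \eqref{eq:T11:inv} against the point-target benchmark \eqref{eq:Fisher:point_target}. The direction is immediate: the $(2,2)$ entry of $\mathbf{T}_{11}^{-1}$ is $Z^{-1}$, so $C(\mathring{\phi}) = (2E/N_0)^{-1}Z^{-1}$, which coincides \emph{exactly} (not merely approximately, within the long-range regime) with $C_\text{p.t.}(\mathring{\phi})$ from \eqref{eq:Fisher:point_target}. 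Hence the second relationship needs no further approximation.

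For the range I would start from the $(1,1)$ entry $C(\mathring{d}) = (2E/N_0)^{-1}(L - A^2/B)^{-1}$ and first establish the two-sided bound \eqref{eq:star_product_approx}. The lower bound $C(\mathring{d}) \geq (N_0/2E)L^{-1} = C_\text{p.t.}(\mathring{d})$ follows at once from $A^2/B \geq 0$. For the upper bound I would discard the nonnegative term $(\alpha+1)^2\|\operatorname{P}_w^\perp(v)\|_\star^2/\|w\|_\star^2$ in $B$, so that $B \geq L\|\bar{w}\,\bar{\mathbf{p}}_\perp^\mathsf{T}\mathbf{R}\boldsymbol{\rho}\|_\star^2$, and then apply the Cauchy--Schwarz inequality for the star inner product together with $\|\bar{w}\|_\star = 1$ to obtain $A^2/B \leq L\,[\text{cos}]^2$, where $[\text{cos}]$ is the normalized star-product $\langle w, w\,\mathbf{p}_\perp^\mathsf{T}\mathbf{R}\boldsymbol{\rho}\rangle_\star / (\|w\|_\star\,\|w\,\mathbf{p}_\perp^\mathsf{T}\mathbf{R}\boldsymbol{\rho}\|_\star)$; the normalization constants $\|w\|_\star$ and $\|\mathbf{p}_\perp\|_\star$ cancel between numerator and denominator, recovering exactly the ratio appearing in \eqref{eq:star_product_approx}. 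Inverting gives the stated upper bound. The approximation $C(\mathring{d}) \approx C_\text{p.t.}(\mathring{d})$ then reduces to $[\text{cos}]^2 \approx 0$: since $w = (\sinTrunc(\phi-\beta))^{\alpha+1} \geq 0$ acts as a positive weight while $w\,\mathbf{p}_\perp^\mathsf{T}\mathbf{R}\boldsymbol{\rho}$ is (nearly) odd in $u$ about the contour point aligned with the LOS, the weighted integral defining the star-product nearly cancels for any symmetric contour.

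For the orientation I would read off the $(3,3)$ entry, $C(\varphi) = (2E/N_0)^{-1}[Z^{-1} + (B - A^2/L)^{-1}]$, and transfer the smallness already obtained for the range via the elementary identity $(B - A^2/L)^{-1} = (L/B)(L - A^2/B)^{-1}$, which holds because both inverted quantities share the common numerator $LB - A^2$. Since the range analysis gives $(L - A^2/B)^{-1} \approx L^{-1}$, this identity yields $(B - A^2/L)^{-1} \approx B^{-1}$, whence $C(\varphi) \approx (2E/N_0)^{-1}[Z^{-1} + B^{-1}] = C_\text{p.t.}(\mathring{\phi}) + (2E/N_0)^{-1}B^{-1}$, which is the third relationship. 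Equivalently, one may write $B - A^2/L = B\,(1 - A^2/(LB))$ and note that $A^2/(LB) \approx 0$ is precisely the smallness established above; the optional remark that $C(\varphi_\text{rel.}) \approx (2E/N_0)^{-1}B^{-1}$ then follows by the substitution $\varphi = \varphi_\text{rel.} + \mathring{\phi}$, exhibiting $C(\varphi)$ as the sum of the direction and relative-orientation variances.

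The main obstacle is the near-vanishing $[\text{cos}] \approx 0$, which is the only genuinely approximate step and simultaneously underpins both the range and orientation claims; everything else is an exact read-off of \eqref{eq:T11:inv}, the positivity of the discarded term, Cauchy--Schwarz, and one algebraic rearrangement. A fully rigorous treatment would require quantifying the departure from oddness of $u \mapsto w(u)^2\,\mathbf{p}_\perp^\mathsf{T}\mathbf{R}\boldsymbol{\rho}(u)\,\|\dot{\mathbf{r}}(u)\|$ about the LOS-aligned point and bounding the residual integral. I would instead argue it qualitatively from the assumed symmetry of vehicular contours, so that the three relationships are to be understood as long-range approximations valid for such targets.
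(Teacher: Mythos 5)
Your proposal is correct and follows essentially the same route as the paper: reading the three diagonal entries off $\mathbf{T}_{11}^{-1}$ in \eqref{eq:T11:inv}, establishing the two-sided bound \eqref{eq:star_product_approx} by dropping the nonnegative $(\alpha+1)^2\|\operatorname{P}_w^\perp(v)\|_\star^2/\|w\|_\star^2$ term in $B$, invoking the near-orthogonality $\langle w, w\,\mathbf{p}_\perp^\mathsf{T}\mathbf{R}\boldsymbol{\rho}\rangle_\star \approx 0$ on the grounds that $w\geq 0$ while $w\,\mathbf{p}_\perp^\mathsf{T}\mathbf{R}\boldsymbol{\rho}$ is nearly odd, and transferring that smallness to the orientation via $(B-A^2/L)^{-1}=(L/B)(L-A^2/B)^{-1}$. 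The only genuinely approximate step you identify (the vanishing normalized star product) is exactly the one the paper leaves at the same qualitative level.
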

	
	\subsection{HCRB for Unknown Shape}\label{sec::HCRB_unknown}
	
		\begin{table*}
\caption{Summary of Cram\'er-Rao bound expressions for point-like and extended targets}\label{table}
\begin{center}
\begin{tabular}{c|c}
\hline\hline\\[-0.1cm]
Exact HCRB  & $ \frac{\|w\|_\star^2}{2} \left(\frac{2 E}{N_0}\right)^{-1}
		\Big[
		 L
		\langle w\boldsymbol{\mu},w\boldsymbol{\mu} \rangle_\star
		+ M
		\langle w\cos\phi\boldsymbol{\eta},w\cos\phi\boldsymbol{\eta} \rangle_\star
		+ (\alpha+1)^2
		\langle \operatorname{P}_w^\perp\left(v\,\boldsymbol{\xi}\right), \operatorname{P}_w^\perp\left(v\,\boldsymbol{\xi}\right) \rangle_\star  \Big]^{-1}$\\[0.2cm]
\hline\\[-0.1cm]
CRB for point-like target & $\mathbf{C}_\text{p.t.}(\mathring{d},\mathring{\phi}) = \begin{bmatrix}
C_\text{p.t.}(\mathring{d}) & 0 \\
0 & C_\text{p.t.}(\mathring{\phi})
\end{bmatrix}$ \\[0.2cm]
& with $C_\text{p.t.}(\mathring{d}) = \left(\frac{2 E}{N_0}\right)^{-1} L^{-1}$, \quad  $C_\text{p.t.}(\mathring{\phi}) = \left(\frac{2 E}{N_0}\right)^{-1} Z^{-1}$
\\[0.2cm]
\hline\\[-0.1cm]
Approximate HCRB, known shape\\[-0.4cm] (for long range) & $\mathbf{C}(\mathring{d},\mathring{\phi},\varphi) \approx \left(2\frac{E}{N_0}\right)^{-1} \begin{bmatrix}
		\left(L -\frac{A^2}{B}\right)^{-1} & 0 & -\left(A-\frac{LB}{A}\right)^{-1} \\
		0 & Z^{-1} & Z^{-1} \\
		-\left(A-\frac{LB}{A}\right)^{-1} & Z^{-1} & Z^{-1}+\left(B-\frac{A^2}{L}\right)^{-1}
		\end{bmatrix}$  \\[0.8cm]
& $C(\mathring{d}) \approx C_\text{p.t.}(\mathring{d})$, \quad
 $C(\mathring{\phi}) \approx C_\text{p.t.}(\mathring{\phi})$\\[0.2cm]
& $C(\varphi) \approx C_\text{p.t.}(\mathring{\phi}) +\left(2\frac{E}{N_0}\right)^{-1} B^{-1}$
\\[0.2cm]
\hline\\[-0.1cm]
Approximate HCRB, unknown shape\\[-0.2cm] (for long range) & $\mathbf{C}(\mathring{d},\mathring{\phi},\varphi) \approx \left(2\frac{E}{N_0}\right)^{-1} \begin{bmatrix}
		C^{-1} & 0 & -D^{-1} \\
		0 & Z^{-1} & Z^{-1} \\
		-D^{-1} & Z^{-1} & Z^{-1}+F^{-1}
		\end{bmatrix}$  \\[0.6cm]
& $C(\mathring{d}) \approx  C_\text{p.t.}(\mathring{d}) \,
	 \|w\|_\star^{2} / \left\| \operatorname{P}_\mathcal{B}^\perp(w,0) \right\|_\star^2$, \quad  $C(\mathring{\phi}) \approx C_\text{p.t.}(\mathring{\phi})$\\[0.2cm]
& $C(\varphi) \approx C_\text{p.t.}(\mathring{\phi}) +\left(2\frac{E}{N_0}\right)^{-1}	 
	 \|w\|_\star^{2} / \left\|\operatorname{P}_\mathcal{D}^\perp\left(\sqrt{L}\,w\,\bar{\mathbf{p}}_\perp^\mathsf{T}\mathbf{R}\boldsymbol{\rho},(\alpha+1)\operatorname{P}_w^\perp (v)\right) \right\|_\star^2
	$ \\[0.3cm]
\hline\hline
\end{tabular}
\end{center}
\label{tabella}
\end{table*}%

		The result of Theorem~\ref{lemma:large_distance} is now further developed for the case of unknown vehicle contour.
		
	\begin{proposition} \label{lemma:far_unknown}
		For an extended target with \emph{unknown contour}, i.e., vectors $\mathbf{m}$ and $\mathbf{n}$ unknown, 
		\begin{equation} \label{eq:far_unknown}
		\mathbf{C}(\mathring{d},\mathring{\phi},\varphi) = \left(2\frac{E}{N_0}\right)^{-1} \mathbf{U}
		+o\left(\mathring{d}^{4}\right)
		\end{equation}
		where\vspace{-0.2cm}
		\begin{align}
		\mathbf{U} &=
		\begin{bmatrix}
		C^{-1} & 0 & -D^{-1} \\
		0 & Z^{-1} & Z^{-1} \\
		-D^{-1} & Z^{-1} & Z^{-1}+F^{-1}
		\end{bmatrix}\\
		C &= L-H-\frac{(A-J)^2}{B-I}\\
		D &= -A-J-\frac{(L-H)(B-I)}{A-J} \\
		F &= B-I-\frac{(A-J)^2}{L-H},
		\end{align}
		$H = \|\mathbf{T}_{22}^{-\nicefrac{1}{2}}\mathbf{c}\|^2$, $I = \|\mathbf{T}_{22}^{-\nicefrac{1}{2}}\mathbf{d}\|^2$, $J = \mathbf{c}^\mathsf{T} \mathbf{T}_{22}^{-1} \mathbf{d}$, and $\mathbf{T}_{22}$ is provided in \eqref{eq::T22}, while $\mathbf{c}$ and $\mathbf{d}$ are given in \eqref{eq::cbold} and \eqref{eq::dbold}.
	\end{proposition}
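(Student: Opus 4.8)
The plan is to recover the $(\mathring{d},\mathring{\phi},\varphi)$ covariance as the leading block of the inverse of the long-range EFIM $2(E/N_0)\mathbf{T}$ furnished by Theorem~\ref{lemma:large_distance}. Unlike the known-shape case of Proposition~\ref{lemma:far_known}, the contour coefficients $\mathbf{m},\mathbf{n}$ can no longer be discarded and must instead be marginalized. By the block-inversion (Schur) formula, the wanted $3\times3$ covariance is $(2E/N_0)^{-1}$ times the inverse of
\begin{equation}
\tilde{\mathbf{T}}_{11}\triangleq\mathbf{T}_{11}-\mathbf{T}_{21}^\mathsf{T}\mathbf{T}_{22}^{-1}\mathbf{T}_{21},
\end{equation}
so the matrix to be inverted is exactly the $\mathbf{T}_{11}$ of the known case, corrected by the information that leaks into the contour block. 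The first step is therefore to invert the $2Q\times2Q$ matrix $\mathbf{T}_{22}$ of \eqref{eq::T22} and to evaluate the quadratic forms it induces on the range and orientation columns $\mathbf{c},\mathbf{d}$ of $\mathbf{T}_{21}$ (eqs.~\eqref{eq::cbold}--\eqref{eq::dbold}), which produces the scalars $H=\mathbf{c}^\mathsf{T}\mathbf{T}_{22}^{-1}\mathbf{c}$, $I=\mathbf{d}^\mathsf{T}\mathbf{T}_{22}^{-1}\mathbf{d}$ and $J=\mathbf{c}^\mathsf{T}\mathbf{T}_{22}^{-1}\mathbf{d}$.

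I would then reuse the change of variables that trivialized the known-shape case, replacing the absolute orientation $\varphi$ by the relative orientation $\varphi_\text{rel}=\varphi-\mathring{\phi}$ through the unimodular map $\mathbf{P}=\bigl[\begin{smallmatrix}1&0&0\\0&1&0\\0&1&1\end{smallmatrix}\bigr]$, i.e.\ forming $\mathbf{P}^\mathsf{T}\tilde{\mathbf{T}}_{11}\mathbf{P}$. The key structural fact to verify is that, to the retained order $o(\mathring{d}^{-4})$, the direction column of $\mathbf{T}_{21}$ equals the negative of its orientation column — mirroring the $A,-A$ entries already present in $\mathbf{T}_{11}$ — so that the direction column of $\mathbf{T}_{21}\mathbf{P}$ vanishes and $\mathring{\phi}$ decouples from the contour. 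Granting this, $\mathbf{P}^\mathsf{T}\tilde{\mathbf{T}}_{11}\mathbf{P}$ acquires precisely the same sparsity pattern as the transformed known-shape EFIM, but with the substitutions $L\mapsto L-H$, $A\mapsto A-J$, $B\mapsto B-I$, the direction entry remaining the isolated $Z$.

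From this point the computation runs parallel to Proposition~\ref{lemma:far_known}: inverting the coupled $2\times2$ block in $(\mathring{d},\varphi_\text{rel})$ in closed form gives the diagonal entries $C^{-1}$ and $F^{-1}$ with $C,F$ as stated, its off-diagonal supplies $-D^{-1}$, and the decoupled bearing contributes $Z^{-1}$; conjugating back by $\mathbf{P}$ (forming $\mathbf{P}\,\hat{\mathbf{C}}\,\mathbf{P}^\mathsf{T}$) reinstates the $Z^{-1}$ entries in the $\mathring{\phi}$ row and column and assembles $\mathbf{U}$. I expect the genuine obstacle to lie entirely in the first step — inverting $\mathbf{T}_{22}$ and establishing the asymptotic cancellation that decouples the bearing from the contour — since once $\tilde{\mathbf{T}}_{11}$ has been reduced to the known-shape template, the remaining $2\times2$ inversion and the $\mathbf{P}$-conjugation are mechanical.
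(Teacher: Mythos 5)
Your algebraic route is essentially the paper's: the paper also obtains $\mathbf{U}$ as the inverse of the Schur complement $\mathbf{T}_{11}-\mathbf{T}_{21}^\mathsf{T}\mathbf{T}_{22}^{-1}\mathbf{T}_{21}$ of the contour block (it dispatches this in one sentence as ``easily verified by the block inversion formula''), and your identifications $H=\mathbf{c}^\mathsf{T}\mathbf{T}_{22}^{-1}\mathbf{c}$, $I=\mathbf{d}^\mathsf{T}\mathbf{T}_{22}^{-1}\mathbf{d}$, $J=\mathbf{c}^\mathsf{T}\mathbf{T}_{22}^{-1}\mathbf{d}$ coincide with the stated ones. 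The ``key structural fact'' you flag for verification is already supplied by Theorem~\ref{lemma:large_distance}: by \eqref{eq::T21}, $\mathbf{T}_{21}=[\mathbf{c}\ \mathbf{q}\ -\mathbf{q}]$ with $\mathbf{d}\equiv\mathbf{q}$, so the correction matrix has entries $(H,J,-J;J,I,-I;-J,-I,I)$ and the Schur complement is exactly $\mathbf{T}_{11}$ with $L\mapsto L-H$, $A\mapsto A-J$, $B\mapsto B-I$; your $\mathbf{P}$-conjugation to relative orientation then reproduces \eqref{eq:T11:inv} with these substitutions. (Carried through, this yields $D=(A-J)-(L-H)(B-I)/(A-J)$; the proposition's $-A-J-\cdots$ is inconsistent with the known-shape pattern $-(A-LB/A)^{-1}$ and appears to be a sign typo in the statement, not a flaw in your derivation.)

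The one step you omit is the part the paper's proof actually spends its length on: justifying that the $o(\mathring{d}^{-4})$ error in the EFIM of Theorem~\ref{lemma:large_distance} propagates through matrix inversion to an $o(\mathring{d}^{4})$ error in the covariance. You pass directly from ``the long-range EFIM is $2(E/N_0)\mathbf{T}$'' to ``the covariance is $(2E/N_0)^{-1}$ times the inverse of its Schur complement,'' but the proposition asserts an asymptotic equality for $\mathbf{C}=\mathbf{J}^{-1}$, not for $\mathbf{T}^{-1}$. The paper closes this by writing
\begin{equation}
\mathring{d}^{-4}\Big[\mathbf{C}-\big(2\tfrac{E}{N_0}\big)^{-1}\mathbf{T}^{-1}\Big]
=\big(\mathring{d}^{4}\mathbf{J}\big)^{-1}\Big[\mathring{d}^{4}2\tfrac{E}{N_0}\mathbf{T}-\mathring{d}^{4}\mathbf{J}\Big]\big(\mathring{d}^{4}2\tfrac{E}{N_0}\mathbf{T}\big)^{-1}
\end{equation}
and taking limits, which additionally requires the invertibility of $\mathbf{T}$ (the paper only verifies this numerically in the unknown-contour case, in contrast to the analytic invertibility argument for $\mathbf{T}_{11}$ in Proposition~\ref{lemma:far_known}). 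This is a fixable omission --- the argument imports verbatim from the proof of Proposition~\ref{lemma:far_known} --- but without it, and without some statement about the nonsingularity of $\mathbf{T}$ (equivalently of $\mathbf{T}_{22}$ and of the Schur complement, without which $H,I,J$ are not even defined), the proposal does not yet establish the $o(\mathring{d}^{4})$ claim.
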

	
	\begin{proof}
		See Appendix~\ref{app:unknown_contour}.
	\end{proof}

	As for the case of known contour, the HCRB on the direction converges to the CRB of the point-like target as the range tends to infinity, namely $C(\mathring{\phi})\approx C_\text{p.t.}(\mathring{\phi})= (2E/N_0)^{-1}Z^{-1}$.
	The expressions of the HCRB on the range and orientation are considerably more complicated. To this end, we first define the new inner product: $\langle(f_1,f_2),(g_1,g_2) \rangle_\star \triangleq \langle f_1,g_1 \rangle_\star +\langle f_2,g_2 \rangle_\star$ over the space $\mathcal{F}\times\mathcal{F}$ where $\mathcal{F}=L^2([0, 2\pi])$. For convenience, the new inner product is also overloaded such that $\langle(\mathbf{f}_1,\mathbf{f}_2),(\mathbf{g}_1,\mathbf{g}_2) \rangle_\star \triangleq \langle\mathbf{f}_1,\mathbf{g}_1 \rangle_\star +\langle\mathbf{f}_2,\mathbf{g}_2 \rangle_\star$. The associated projection operator is $\operatorname{P}_\mathcal{A}(f_1,f_2) \triangleq {\langle(f_1,f_2),(\mathbf{g}_1,\mathbf{g}_2) \rangle_\star} {\langle(\mathbf{g}_1,\mathbf{g}_2),(\mathbf{g}_1,\mathbf{g}_2) \rangle_\star^{-1}} (\mathbf{g}_1,\mathbf{g}_2)$ where the rows of $(\mathbf{g}_{1},\mathbf{g}_{2})$ form a basis spanning $\mathcal{A}\subset\mathcal{F}\times\mathcal{F}$, and $\operatorname{P}_\mathcal{A}^\perp(f_1,f_2) \triangleq (f_1,f_2)-\operatorname{P}_\mathcal{A}(f_1,f_2)$. Then, we obtain
	\begin{align}
	L - H & =  \|w\|_\star^{-2} \left\|  \operatorname{P}_\mathcal{A}^\perp(\sqrt{L}w,0) \right\|_\star^2 \\
	B  - I & =  \|w\|_\star^{-2} \left\|\operatorname{P}_\mathcal{A}^\perp\left(\sqrt{L}\,w\,\bar{\mathbf{p}}_\perp^\mathsf{T}\mathbf{R}\boldsymbol{\rho},(\alpha+1)\operatorname{P}_w^\perp (v)\right) \right\|_\star^2 \\
	A - J & =  \|w\|_\star^{-2} \nonumber \\
	&  \times \left\langle \operatorname{P}_\mathcal{A}^\perp(\sqrt{L}w,0), \operatorname{P}_\mathcal{A}^\perp\left( \sqrt{L}\,w\,\bar{\mathbf{p}}_\perp^\mathsf{T}\mathbf{R}\boldsymbol{\rho},(\alpha + 1)\operatorname{P}_w^\perp (v)\right) \right \rangle_{\! \star}
	\end{align}
	where $\mathcal{A}=\operatorname{span}\Big\{\{(\sqrt{L}ws_q, (1+\alpha)t_q)\}_{q=1}^{2Q}\Big\}$, and $s_q$ and $t_q$ are the components of vectors \eqref{eq:g} and \eqref{eq:h}, respectively. We have therefore proven the following result.
	\begin{proposition}
	    For an extended target with \emph{unknown contour} at a sufficiently long range, the following approximate relationships hold true
	\begin{align}
	C(\mathring{d}) &\approx C_\text{p.t.}(\mathring{d})
	\left(\|w\|_\star^{-2} \left\| \operatorname{P}_\mathcal{B}^\perp(w,0) \right\|_\star^2\right)^{-1} \label{eq:HCRB:unknown:range} \\
	C(\mathring{\phi}) &\approx C_\text{p.t.}(\mathring{\phi}) \\
	\begin{split}
	C(\varphi) &\approx C_\text{p.t.}(\mathring{\phi}) +\left(2\frac{E}{N_0}\right)^{-1}	 \\
	&  \times \left(\|w\|_\star^{-2} \left\|\operatorname{P}_\mathcal{D}^\perp\left(\sqrt{L}\,w\,\bar{\mathbf{p}}_\perp^\mathsf{T}\mathbf{R}\boldsymbol{\rho},(\alpha+1)\operatorname{P}_w^\perp (v)\right) \right\|_\star^2
	\right)^{-1}
	\end{split} \label{eq:HCRB:unknown:orientation}
	\end{align}
	where $\mathcal{B}=\operatorname{span}\{\mathcal{A},(\sqrt{L}w\bar{\mathbf{p}}_\perp^\mathsf{T}\mathbf{R}\boldsymbol{\rho},(1+\alpha)\operatorname{P}_w^\perp (v))\}$ and $\mathcal{D}=\operatorname{span}\{\mathcal{A},(\sqrt{L}w,0)\}$. 
		\end{proposition}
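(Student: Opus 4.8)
The plan is to read the three bound expressions directly off the inverse‑EFIM matrix $\mathbf{U}$ supplied by Proposition~\ref{lemma:far_unknown} and then recast the scalar entries $C$ and $F$ into the projection form already furnished by the three displayed identities for $L-H$, $B-I$ and $A-J$ immediately preceding the statement. Concretely, since $\mathbf{C}(\mathring{d},\mathring{\phi},\varphi)=(2E/N_0)^{-1}\mathbf{U}+o(\mathring{d}^{4})$, the diagonal of $\mathbf{U}$ yields $C(\mathring{d})=(2E/N_0)^{-1}C^{-1}$, $C(\mathring{\phi})=(2E/N_0)^{-1}Z^{-1}$ and $C(\varphi)=(2E/N_0)^{-1}(Z^{-1}+F^{-1})$, each up to the $o(\mathring{d}^{4})$ remainder that the symbol $\approx$ absorbs. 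The middle relation is then immediate because $(2E/N_0)^{-1}Z^{-1}=C_\text{p.t.}(\mathring{\phi})$ by \eqref{eq:Fisher:point_target}, so all the work sits in rewriting $C^{-1}$ and $F^{-1}$.

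For the range bound I would substitute $C=L-H-(A-J)^2/(B-I)$ and feed in the three projection identities. Writing $\mathbf{x}\triangleq\operatorname{P}_\mathcal{A}^\perp(\sqrt{L}\,w,0)$ and $\mathbf{y}\triangleq\operatorname{P}_\mathcal{A}^\perp(\sqrt{L}\,w\,\bar{\mathbf{p}}_\perp^\mathsf{T}\mathbf{R}\boldsymbol{\rho},(\alpha+1)\operatorname{P}_w^\perp(v))$ as elements of the overloaded bivariate star space, those identities read $L-H=\|w\|_\star^{-2}\|\mathbf{x}\|_\star^{2}$, $B-I=\|w\|_\star^{-2}\|\mathbf{y}\|_\star^{2}$ and $A-J=\|w\|_\star^{-2}\langle\mathbf{x},\mathbf{y}\rangle_\star$. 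Hence $C=\|w\|_\star^{-2}\bigl(\|\mathbf{x}\|_\star^{2}-\langle\mathbf{x},\mathbf{y}\rangle_\star^{2}/\|\mathbf{y}\|_\star^{2}\bigr)=\|w\|_\star^{-2}\|\operatorname{P}_\mathbf{y}^\perp(\mathbf{x})\|_\star^{2}$, i.e.\ the Schur‑complement scalar is itself a squared projection residual. The crucial observation is a nested‑projection (one extra Gram--Schmidt step) identity: because $\mathbf{y}$ is the part of the single generator adjoined to $\mathcal{A}$ that is orthogonal to $\mathcal{A}$, projecting $\operatorname{P}_\mathcal{A}^\perp(\cdot)$ off $\mathbf{y}$ equals projecting the original vector off all of $\mathcal{B}$. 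Therefore $\operatorname{P}_\mathbf{y}^\perp(\mathbf{x})=\operatorname{P}_\mathcal{B}^\perp(\sqrt{L}\,w,0)=\sqrt{L}\,\operatorname{P}_\mathcal{B}^\perp(w,0)$, so $C=L\,\|w\|_\star^{-2}\|\operatorname{P}_\mathcal{B}^\perp(w,0)\|_\star^{2}$; multiplying by $(2E/N_0)^{-1}$ and factoring $C_\text{p.t.}(\mathring{d})=(2E/N_0)^{-1}L^{-1}$ gives exactly \eqref{eq:HCRB:unknown:range}.

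The orientation follows by the symmetric manoeuvre applied to $F=B-I-(A-J)^2/(L-H)$: the roles of $\mathbf{x}$ and $\mathbf{y}$ swap, yielding $F=\|w\|_\star^{-2}\|\operatorname{P}_\mathbf{x}^\perp(\mathbf{y})\|_\star^{2}$, and now it is $(\sqrt{L}\,w,0)$ that is adjoined to $\mathcal{A}$, so the same nested‑projection identity gives $\operatorname{P}_\mathbf{x}^\perp(\mathbf{y})=\operatorname{P}_\mathcal{D}^\perp(\sqrt{L}\,w\,\bar{\mathbf{p}}_\perp^\mathsf{T}\mathbf{R}\boldsymbol{\rho},(\alpha+1)\operatorname{P}_w^\perp(v))$ with $\mathcal{D}=\operatorname{span}\{\mathcal{A},(\sqrt{L}\,w,0)\}$. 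Substituting $F^{-1}$ into $C(\varphi)=(2E/N_0)^{-1}(Z^{-1}+F^{-1})$ and again recognising $(2E/N_0)^{-1}Z^{-1}=C_\text{p.t.}(\mathring{\phi})$ produces \eqref{eq:HCRB:unknown:orientation}, while the direction relation is read off directly.

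The main obstacle I anticipate is establishing the nested‑projection identity cleanly in the \emph{overloaded bivariate} inner product $\langle(f_1,f_2),(g_1,g_2)\rangle_\star$ and verifying that the single adjoined generator is precisely the one defining $\mathcal{B}$ (respectively $\mathcal{D}$); once that lemma is in place, the remainder is bookkeeping. A secondary point requiring care is keeping the factor $\sqrt{L}$ consistent, since it must be pulled out of the norm for the range bound—so as to expose the clean $C_\text{p.t.}(\mathring{d})$ prefactor—but is retained inside the norm for the orientation bound, exactly as the two stated expressions demand.
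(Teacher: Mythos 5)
Your proposal is correct and follows essentially the same route as the paper: read the diagonal of $\mathbf{U}$ from Proposition~\ref{lemma:far_unknown}, rewrite $C$ and $F$ via the projection identities for $L-H$, $B-I$, $A-J$ as squared residuals $\|w\|_\star^{-2}\|\operatorname{P}_{\mathbf{y}}^\perp(\mathbf{x})\|_\star^2$ and $\|w\|_\star^{-2}\|\operatorname{P}_{\mathbf{x}}^\perp(\mathbf{y})\|_\star^2$, and collapse the nested projections onto $\operatorname{P}_\mathcal{B}^\perp$ and $\operatorname{P}_\mathcal{D}^\perp$ since the adjoined generator is already $\mathcal{A}$-orthogonal. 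You in fact make explicit the Gram--Schmidt/nested-projection step and the handling of the $\sqrt{L}$ factor that the paper leaves implicit in its ``we have therefore proven'' summary.
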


	The results of the whole section are summarized in Table~\ref{table}, from which a few remarks follow. First, we notice that in the approximate HCRB for known and unknown shapes, the bound on the direction $\mathring{\phi}$ is always equal to that of the point-like target case since for long ranges this parameter becomes independent of the shape. 
	Finally, as to the distance $\mathring{d}$, there is a difference between known and unknown shape, with the former exhibiting the same bound of the point-like target case while the latter showing an amplification factor that increases the value of the bound. These observations reveal how the lack of knowledge on the vehicle contour impacts onto the achievable estimation performance.

	\section{Numerical Results} \label{sec:simulations}
	
	\subsection{Accuracy Vs.\ Range}
	\label{sub:asymptotic_analysis:results}
	
	This section analyzes the HCRB for the studied cases of known and unknown contour, as a function of the range. The exact HCRB for the three parameters of interest (range, direction and orientation)  is numerically computed by inverting the EFIM \eqref{eq:Fisher}. For comparison purposes, we also plot the CRB of a point-like target for the range and direction assuming equal received energy, together with the long-range approximations of the HCRB for known contour \eqref{eq:HCRB:known:range}--\eqref{eq:HCRB:known:orientation} and unknown contour \eqref{eq:HCRB:unknown:range}--\eqref{eq:HCRB:unknown:orientation}.
	\begin{figure}
		\centering
		\subfloat[]{
			\includegraphics[width=0.85\columnwidth]{./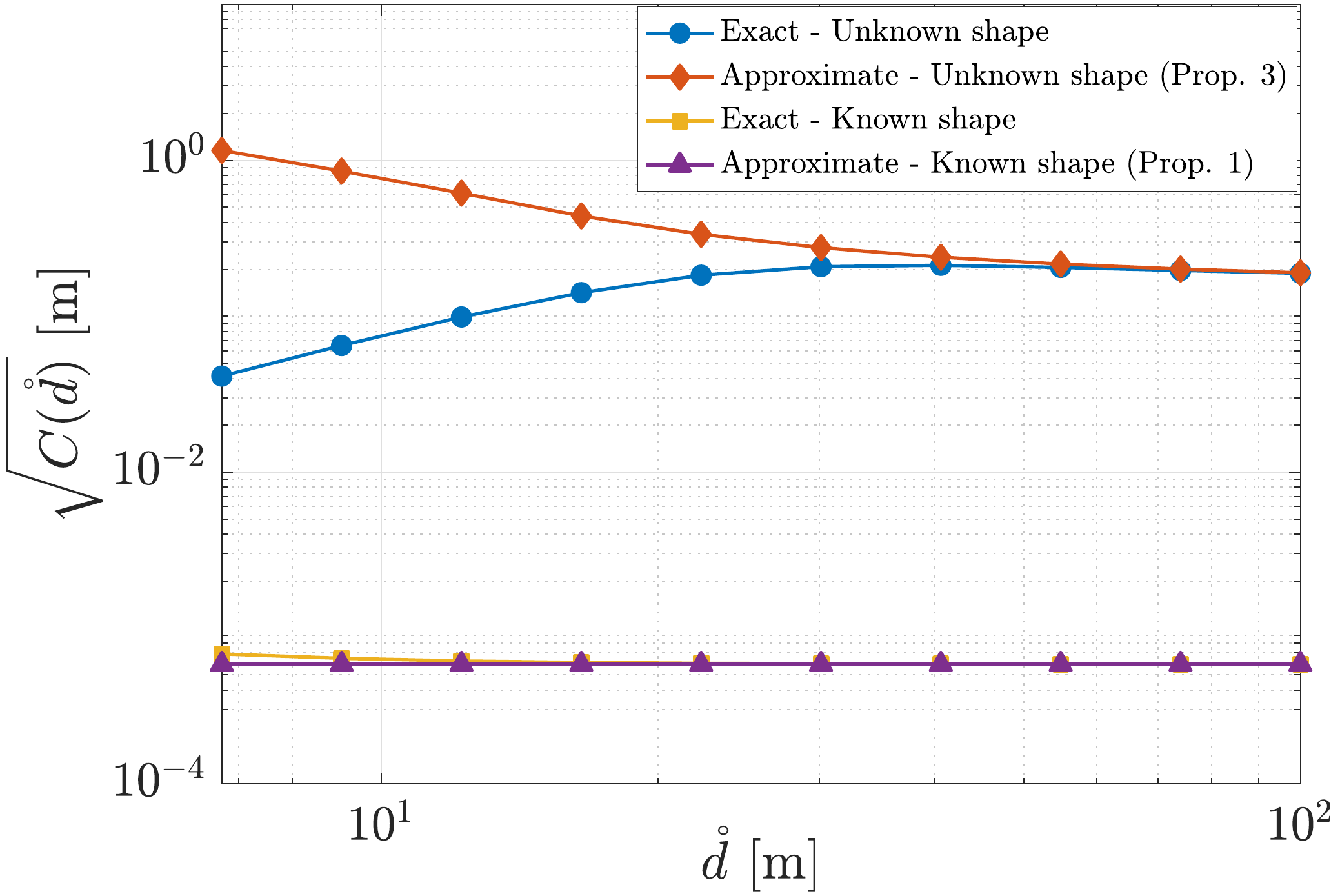}
			\label{fig:HCRBrange} }
			\vspace{-0.2cm}
		\\
		\subfloat[]{
			\includegraphics[width=0.85\columnwidth]{./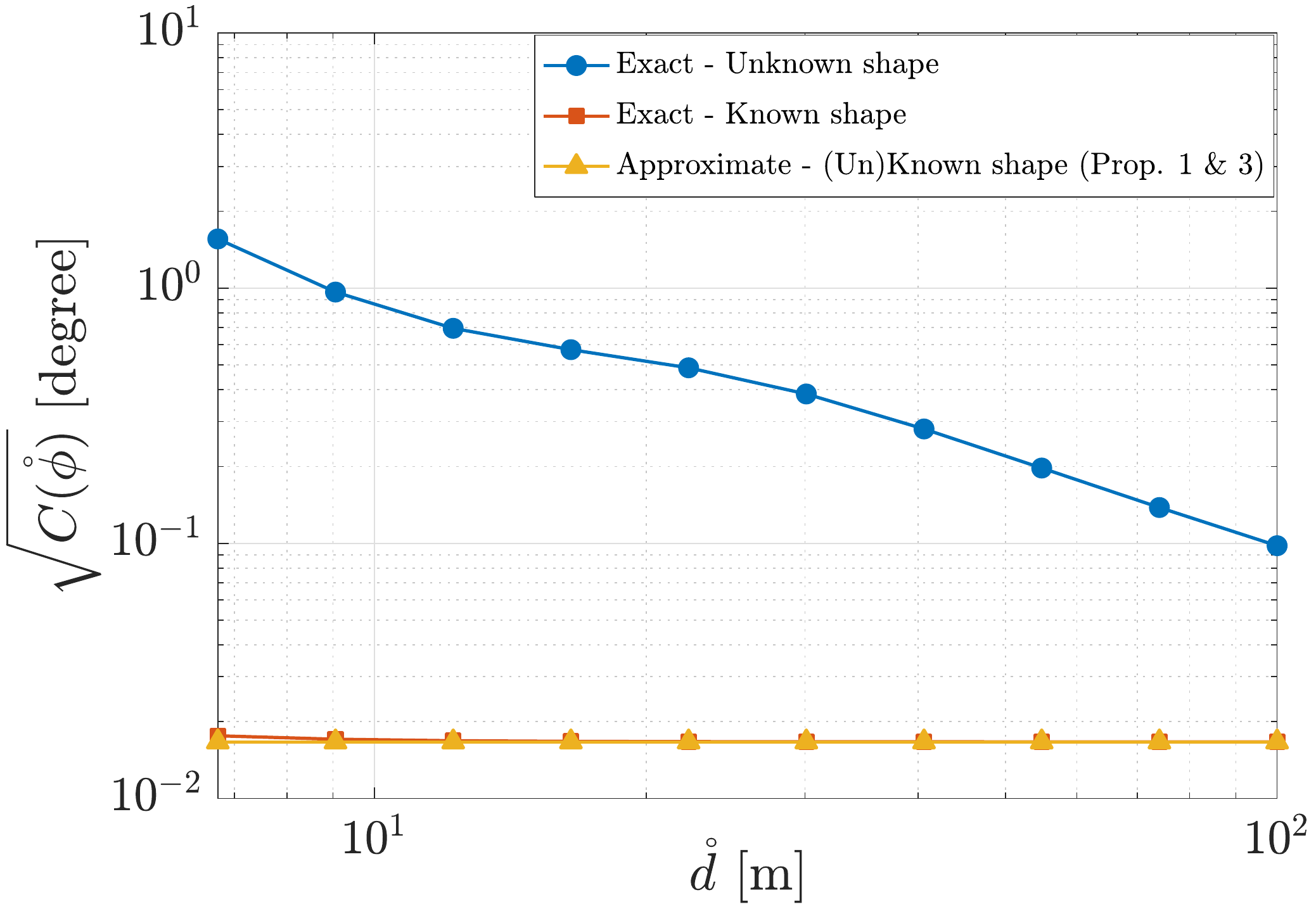}
			\label{fig:HCRBdirection} }
			\vspace{-0.2cm}
		\\
		\subfloat[]{
			\includegraphics[width=0.85\columnwidth]{./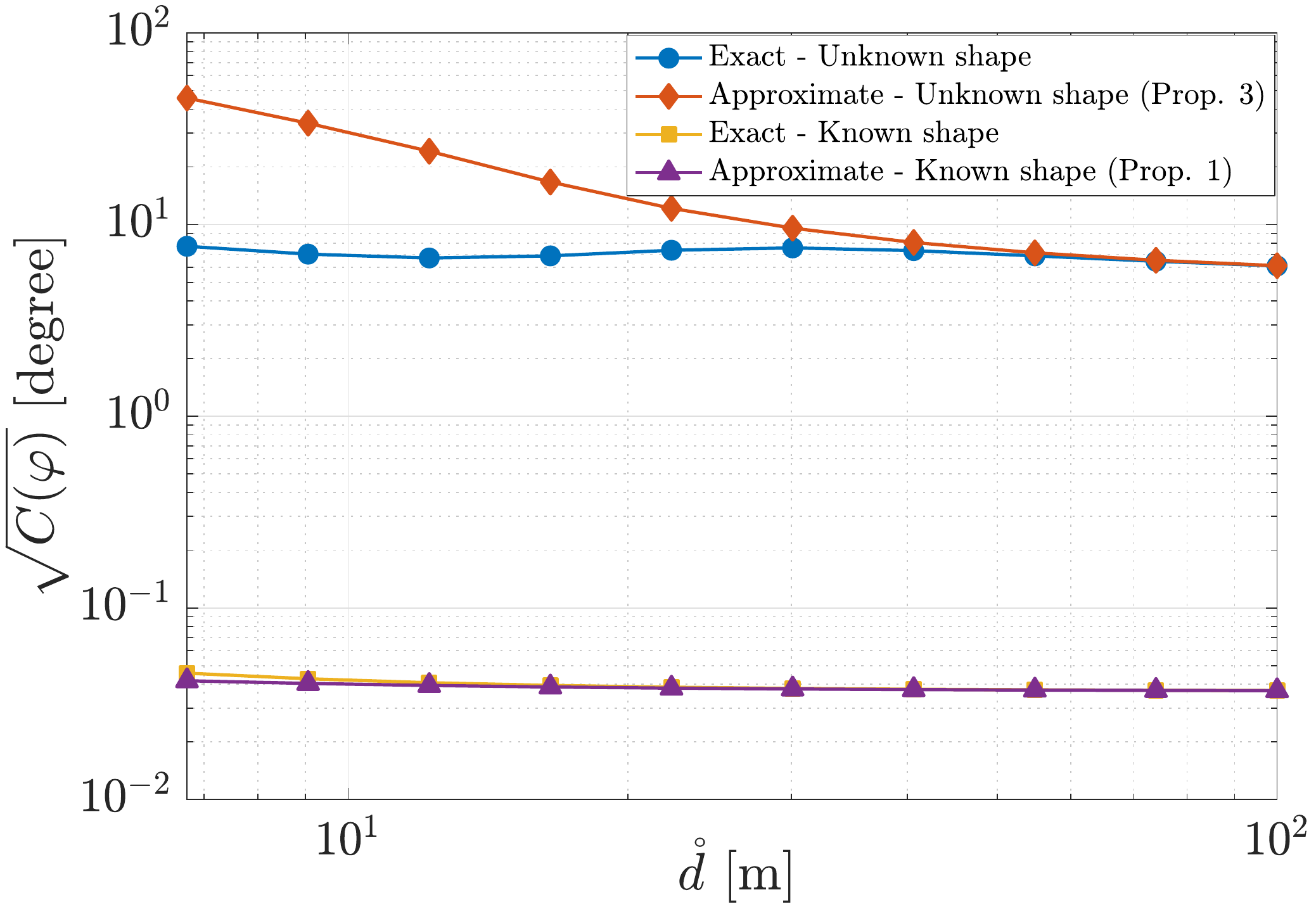}
			\label{fig:HCRBorientation} }
		\caption{HCRBs of the parameters of interest (range, direction and orientation) versus distance. 
		}
		\label{fig:HCRB}
	\end{figure}
	For all plots, the radar is kept at the origin of the global reference system with its broadside pointing the positive $x$-axis and the vehicle's heading is pointing towards the positive half-plane of the $y$-axis, so that $\varphi = \pi/2$. The contour of the vehicle is the same as in Fig.~\ref{fig:geometric_model} and has a  length of approximately \SI{11.2}{\meter}, with a reflectivity coefficient set to $\alpha=5$, and it is parameterized by $Q = 10$ harmonics. The radar transmits a standard chirp signal $s(t) = \frac{1}{\sqrt{T}}\mathrm{rect}(\frac{t}{T})e^{j\pi\frac{B}{T}t^2}$, with $\mathrm{rect}(\cdot)$ the rectangular pulse in the interval $[-1/2,1/2]$, having bandwidth $B=\SI{1}{\giga\hertz}$ and duration $T=\SI{10}{\micro\second}$, and is equipped with a ULA composed of $N = 30$ antennas with half-wavelength spacing. The carrier frequency has thus no impact on the baseband signal, hence is left unspecified.

		\begin{figure}
		\vspace{-0.3cm}
		\centering
		\hspace{-0.6cm}\subfloat[]{
			\includegraphics[width=.85\columnwidth]{./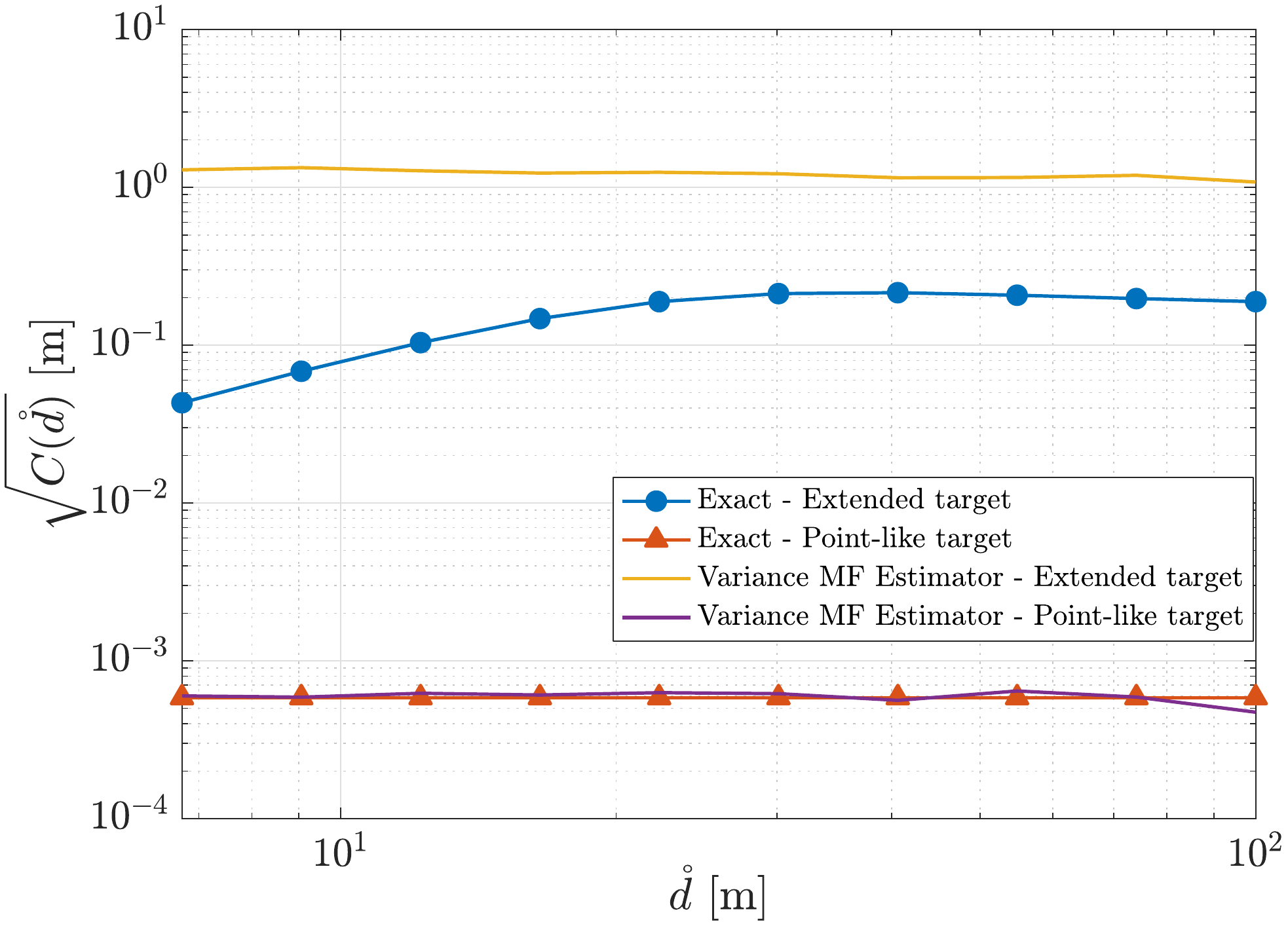}
			\label{fig:tightness:range} }
			\vspace{-0.2cm}
		\hfil
			\subfloat[]{\hspace{-0.6cm}
			\includegraphics[width=.85\columnwidth]{./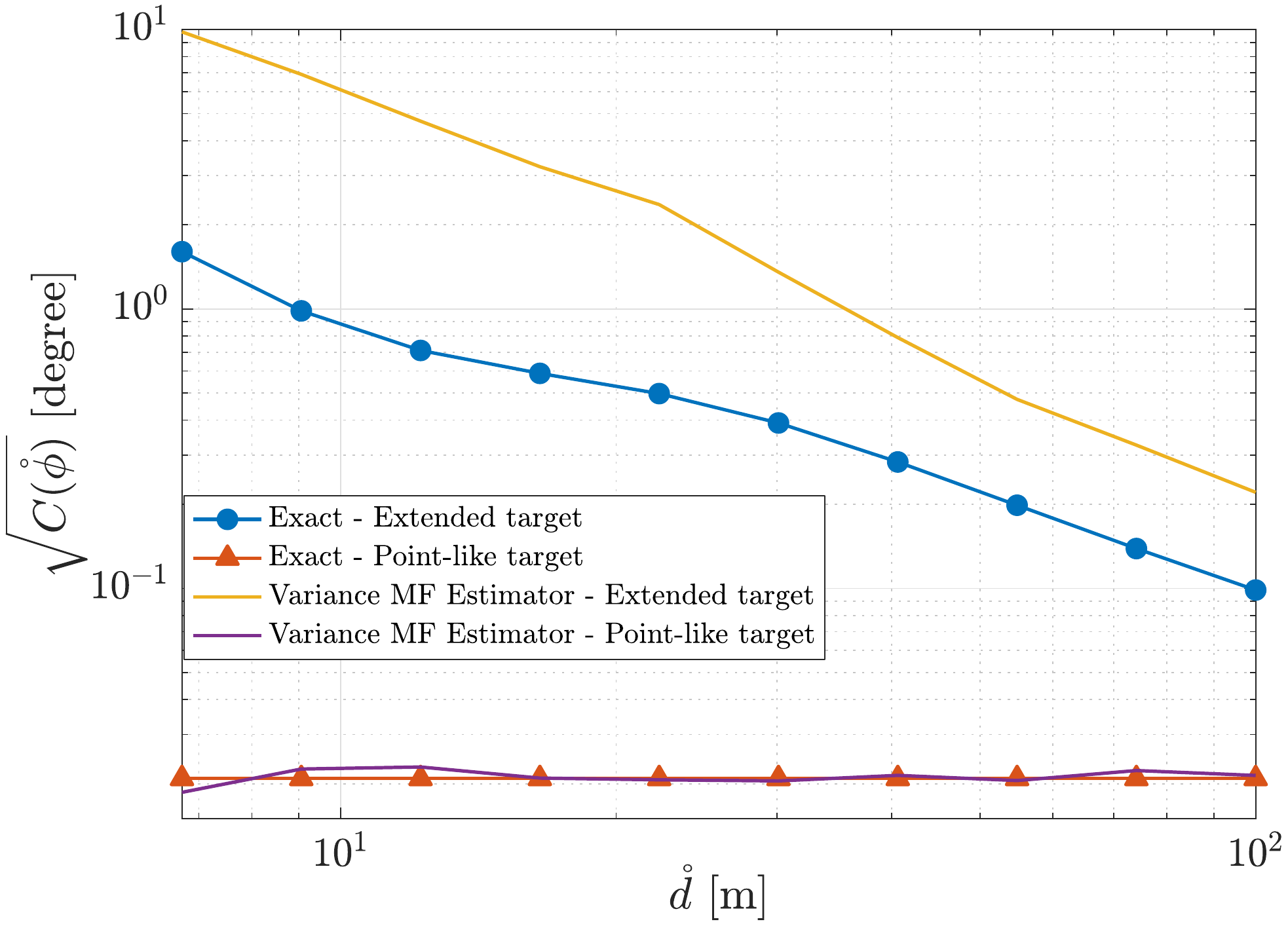}
			\label{fig:tightness:direction} }
		\caption{Variance of the MF-based estimator compared to the HCRBs of range and direction parameters as a function of the distance between target and radar. }
		\label{fig:tightness}
	\end{figure}
	
	In Fig.~\ref{fig:HCRB}, we report the HCRB on the parameters of interest as a function of the target vehicle's position $\mathbf{p}$ ranging from $[6 \ 3]^\mathsf{T}$ to $[89\ 45]^\mathsf{T}$. For the sake of the analysis, the energy is kept fixed\footnote{The dependence of the HCRBs on the energy $E$ is quite intuitive since all the bounds will scale accordingly.} (and set set such that $E/N_0=\SI{40}{\decibel}$ at $\mathbf{p}=[6\ 3]^\mathsf{T}$) irrespective of the distance between the radar and the target, so that any change of the HCRBs can be attributed to the sole range.
	In case of known contour, we find that for the range and direction parameters, the approximate expressions of the HCRBs practically match the exact HCRBs over all the considered ranges, confirming the validity of the results obtained in Section~\ref{sub:known_contour}. Moreover, since the approximate HCRBs equal the CRBs of a point-like target, this analysis reveals that the latter can be considered a good lower bound for the variance of any unbiased estimator of range and direction when the contour of the extended target is known.
	
	When the contour of the target is unknown, the HCRB on all parameters of interest is around three orders of magnitude worse, with a subtle difference for the direction parameter $\mathring{\phi}$, whose HCRBs for known and unknown contour become closer as the range increases (not observable in Fig.~\ref{fig:HCRBdirection}). This behavior is linked to the fact that, for large distances,  the extended target practically degenerates into a point-like target regardless of its contour. The HCRBs on the range and orientation for unknown contour converge to their asymptotic approximations at about \SI{40}{\meter} of distance. Nonetheless, it is clear from the figure that knowing the contour of the extended target is critically important for its accurate localization. The HCRB on the orientation is always larger than the HCRB on the direction, confirming the validity of \eqref{eq:HCRB:known:orientation} and \eqref{eq:HCRB:unknown:orientation}. The range at which the asymptotic HCRB of the range and orientation converge to the true HCRB is smaller for a reduced number of coefficients $Q$ in the vehicle contour: in the extreme case of $Q=1$ (not reported here) such a range is as small as \SI{5}{\meter} for all the three parameters. Notice that $Q = 1$ is tantamount to considering a target contour approximated by an ellipse with unknown semi-axes; intuitively, it can be interpreted as a ``high-level" model to be adopted when the fine-grained details of the target are not of strict interest, or if the available bandwidth is not large enough to observe them.
	
	The next two sections study (i) the validity of the HCRB as a tight lower bound, and (ii) the benefits of multiple radars sensing the same target. The simulation parameters are the same  described in the first paragraph of this section.

	\subsection{Validation of the HCRB}

	Although the HCRB is a lower bound on the variance of any unbiased estimator, it is not necessarily tight asymptotically at high $E/N_0$ as is the marginal CRB \cite{noam2009notes}. 
	Developing an optimal unbiased estimator of the range, direction and orientation for extended targets  is an open-ended problem to the best of our knowledge. For comparison purposes, a simple estimator of the range and direction is presented next and applied to the extended and point targets.
	Let $\mathbf{Y} = [\mathbf{y}(0) \ \cdots \ \mathbf{y}((W-1)/f_\text{s})]$ be the baseband signal at the receiver \eqref{eq:received_signal} plus noise after low-pass filtering and sampling, with $W$ the number of samples and $f_\text{s}$ the sampling frequency.  Then, the proposed estimator of the direction is
	\begin{equation} \label{eq:estimator:direction}
	\hat{\phi} = \argmax{\phi} \left\|\mathbf{a}^\mathsf{H}(\phi) \mathbf{Y}\right\|.
	\end{equation}
	Once the direction is estimated, we perform coherent integration of the  received signals  $\mathbf{z} = \mathbf{a}^\mathsf{H}(\hat{\phi}) \mathbf{Y}$ and apply the standard de-chirping plus FFT estimator in FMCW radars \cite{lees1989digital}. More specifically, we first multiply the known transmitted waveform $s(t)$ with the  received signals (matched filtering). Denoting with $\bm{s} = [s(0) \ \cdots \ s((W-1)/f_s)]^\mathsf{T}$ the vector containing $W$ samples of the transmitted pulse, the mixed received signals can be thus obtained as $\bm{z}_\text{mix}(\hat{\phi}) = \bm{s}^* \odot \bm{z}(\hat{\phi)}$, with $\odot$ denoting the Hadamard element-wise product operator. To obtain an estimate of the range $d$, we exploit the fact that the elements of $\bm{z}_\text{mix}(\hat{\phi})$ can be interpreted as discrete samples of a complex exponential with  frequency $\nu = \frac{2Bd}{T f_s c}$. Accordingly, an estimate of $d$ can be obtained by searching for the frequency $\hat{\nu}$ corresponding to the dominant peak in the FFT of the vector $\bm{z}_\text{mix}(\hat{\phi})$ and by  reversing the relationship as
\begin{equation}
\hat{d} = \frac{\hat{\nu} f_s T c}{2B}. \label{eq:estimator:range}
\end{equation}
Given its structure, in the following we denote \eqref{eq:estimator:direction} and \eqref{eq:estimator:range} as a matched-filter (MF) based estimator.
	
	The variance of the estimators is estimated via a Monte Carlo simulation based on 100 independent trials and it is plotted in Fig.~\ref{fig:tightness} together with the HCRB of the extended target and the CRB of the point-like target.
	It is worth noting that the estimator of the direction \eqref{eq:estimator:direction} does not exploit a priori knowledge of neither the waveform nor the target contour, while the estimator of the range assumes a single return with the same waveform of the transmitted signal.
	Despite being suboptimal approaches, the proposed estimators work remarkably well for the case of a  point-like target. When applied to the extended target, a gap is observed with the theoretical lower bounds, which however reduces as  the range between the radar and the target increases, confirming the validity of the derived HCRBs. More sophisticated estimators (which are beyond the scope of the present contribution) would probably help bridging the gap towards the HCRB.

	\subsection{Radar Diversity}
	\begin{figure}
		\centering
		\includegraphics[width=0.8\columnwidth]{./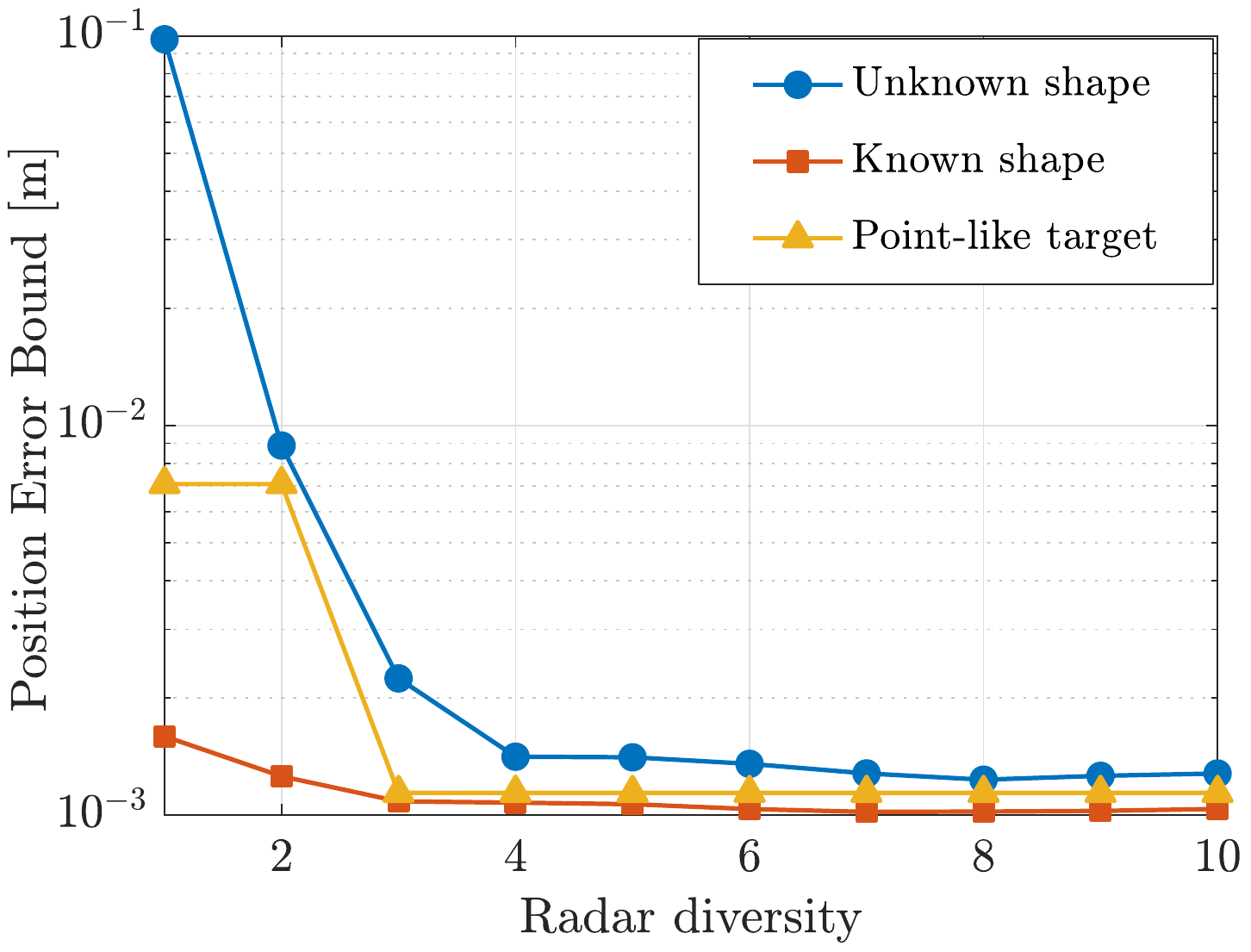}
		\caption{PEB as a function of the number of radars sensing the target.}
		\label{fig:diversity}
	\end{figure}	
	
	 In this section, we evaluate the HCRB when multiple radars are uniformly located around the target vehicle at a fixed range of \SI{7}{\meter}, all of them with the broadside of the array pointing towards the vehicle.
	To keep the analysis fair, the aggregated received energy by all radars remains constant, i.e., $K_rE/N_0=\SI{40}{\decibel}$ where $K_r$ is the number of radars, so that any change in terms of HCRB can be solely attributed to the radar diversity.
	Fig.~\ref{fig:diversity} plots the PEB for an increasing number of radars. For the case of known contour, the positioning accuracy (PEB) improves slightly and saturates already for 3 radars. On the other hand, when the contour is unknown, the positioning accuracy significantly improves as the number of radars increases. This behavior is linked to the fact that the vehicle is sensed from different angles, and consequently the contour can be more accurately estimated. Indeed, by simply passing from 1 to 2 radars, the PEB decreases by about one order of magnitude.
	When 4 or more radars sense the vehicle,  the PEB for unknown contour reduces to about  twice the value of the PEB in case of known contour.

	\section{Conclusions}\label{sec::conclusions}
	This paper investigated the theoretical accuracy achievable in the estimation of range, direction and orientation information of a radar sensing a vehicle modeled as an extended target. We have developed  analytical models that correctly capture the behavior of an extended target and lend themselves to be used for analyzing the impact of the different signal and scenario parameters. The exact HCRB provided in Theorem \ref{prop1} revealed that the ultimate accuracy depends on three different terms related to the square of the signal effective bandwidth, the square of the array aperture, and  to the reflectivity of the target surface. 
To extract further insights, we have also derived valid approximate expressions of the HCRBs for relatively distant targets, considering the two different cases of known and known contours.

The obtained results showed that, if the contour of the target is known, the achievable accuracy in range and direction estimation is the same as in case of a point-like target given equal received energy. Thus, the variance of range estimation decreases linearly with the received energy and quadratically with the effective bandwidth; similarly, the variance of direction estimation decreases linearly with the received energy and quadratically with the array's effective aperture. 

A key outcome of the developed analysis is that knowledge of the contour has an enormous impact on the estimation accuracy of the range and orientation information. In Fig.~\ref{fig:HCRB} we observed that the difference in accuracy between known and unknown contour is up to three orders of magnitude. Nonetheless, the behavior of the direction accuracy is somewhat different compared to the range and orientation because as the vehicle moves farther away the accuracy for known/unknown contour converges. This is consistent with the intuition that, as the vehicle moves far away from the radar, it occupies a narrower range of angles and, consequently, the shape plays a less important role. Moreover, the parameter that is unique to the extended target --- the orientation --- turns out to depend on many parameters: received energy, bandwidth, array size, and even the reflection coefficient.

Finally, we have shown that lack of knowledge of the extended-target contour can be partially compensated by increasing the number of radars sensing the target vehicle from different angles as outlined in Fig.~\ref{fig:diversity}: for the considered parameters, four or more radars yield an average positioning error that is only twice as larger as for known contour. 

Some interesting directions of future work are now outlined. First, in this work we treated the surface roughness $\alpha$ as a known parameter for the sake of decoupling the effect of having some a priori knowledge of the vehicle contour and investigating its impact on the resulting HCRBs. As the two analytical results presented in Sec.~\ref{sub:known_contour} and Sec.~\ref{sec::HCRB_unknown} show, the lack of knowledge of the target contour already makes the derivation of the corresponding HCRBs and their analysis significantly more complex compared to the case of known contour. However, since $\alpha$ is generally unknown for the radar, an interesting future direction of research may be to extend the present analysis also to the case where the target reflectivity is unknown. In this sense, the HCRBs derived in this manuscript can be considered as more optimistic lower bounds.

Another important aspect concerns the correct choice of the level of detail (i.e., number of coefficients $Q$) to be used in the geometric contour model presented in Sec.~\ref{sec::contourmodel}. In this respect, methodologies similar to those used for the selection of an appropriate model order in traditional estimation problems (e.g., AOA estimation) can be considered, taking into account the existing trade-off between accuracy in the representation of the extended target and number of measurements required to get a reasonable contour estimate.

Lastly, a complete characterization of the theoretical localization accuracy in a 3D scenario can be of interest  outside the automotive context, e.g., in presence of aerial targets such as UAVs, and as such it deserves further investigation. Therefore, extending the proposed methodology and analyses to other application contexts represents a possible future research direction.

	\appendices


	
	\section{Proof of Theorem~
	\ref{prop1}}\label{app:FIS}
	
	\subsection{General proof}\label{app1A}
	
	Define $\mathbf{d}_k \triangleq \mathbf{a}(\phi_k) (\sinTrunc (\phi_k-\beta_k))^{\alpha+1}\, s(t-2d_k/c)$, then \eqref{eq:received_signal} rewrites as $\mathbf{e} = g \sqrt{\ell_\text{R}} \sum_{k=1}^{K} h_k \mathbf{d}_k$.
	From \eqref{eq:log-likelihood_obs:derivatives} and using the definition of $\mathbf{S}_{22} = -\E \left[ \bigtriangleup_{\boldsymbol{\gamma}}^{\boldsymbol{\gamma}} \log p(\mathbf{y} |\boldsymbol{\theta}) \right]$,
	\begin{align}
	\mathbf{S}_{22} &=
	\frac{2g^2}{N_0}\ell_\text{R} \Re \int_{T} \sum_{k=1}^K \sum_{k'=1}^K \E \left[h_k^*h_{k'} \right]
	\frac{\partial\mathbf{d}_k^\mathsf{H}}{\partial \boldsymbol{\gamma}}
	\frac{\partial\mathbf{d}_{k'}}{\partial \boldsymbol{\gamma}^\mathsf{T}} \,\mathrm{d}t \label{eq:S22:start} \nonumber \\
	&= \frac{2g^2}{N_0}\ell_\text{R} 
	\sum_{k=1}^K \Re \int_{T}
	\frac{\partial\mathbf{d}_k^\mathsf{H}}{\partial \boldsymbol{\gamma}}
	\frac{\partial\mathbf{d}_k}{\partial \boldsymbol{\gamma}^\mathsf{T}} \,\mathrm{d}t
	\end{align}
	where we used the fact that $\E \left[h_k^* h_{k'}\right] =0$ for $k\neq k'$ and $\E \left[|h_k|^2\right] =1$.
	Calculation of $\partial\mathbf{d}_k^\mathsf{H}/\partial \boldsymbol{\gamma}$ is complicated because it depends on $\{u_k\}_{k=1}^K$, whose values in turn depend on the vehicle contour (i.e., $\{a_q, b_q\}_{q=1}^Q$). To proceed further, we approximate the sum by an integral similarly to what was done in \eqref{eq:aggregated_power:2}--\eqref{eq:aggregated_power:sum} where, on the contrary, the integral was approximated by a sum:
	\begin{align}
	\mathbf{S}_{22} &\approx \frac{2g^2}{N_0} \int_{\mathcal{C}} \Re \int_{T} \frac{\partial\mathbf{d}^\mathsf{H}(u)}{\partial \boldsymbol{\gamma}}
	\frac{\partial\mathbf{d}(u)}{\partial \boldsymbol{\gamma}^\mathsf{T}} \textbf{ }\,\mathrm{d}t\;\mathrm{d}\bm{r} \nonumber \\
	&= \frac{2g^2}{N_0} \int_{0}^{2\pi} \left(\Re \int_{T} \frac{\partial\mathbf{d}(u)^\mathsf{H}}{\partial \boldsymbol{\gamma}}
	\frac{\partial\mathbf{d}(u)}{\partial \boldsymbol{\gamma}^\mathsf{T}} \,\mathrm{d}t\right) \|\dot{\mathbf{r}}(u)\| \,\mathrm{d}u \label{eq:Fisher:S22}
	\end{align} 
	where for notation brevity 
	$$
	\mathbf{d}(u) = \mathbf{d}(\mathbf{r}(u)) =  \mathbf{a}(\phi) (\sinTrunc (\phi-\beta))^{\alpha+1}\, s(t-2d/c)
	$$
	and as usual we have omitted the dependency on $u$ of the vector of intermediate variables $\mathbf{\Theta} = [d\ \phi\ \beta]^\mathsf{T}$.
	By applying the chain rule $\frac{\partial\mathbf{d}^\mathsf{H}}{\partial\boldsymbol{\gamma}} = \frac{\partial\mathbf{\Theta}^\mathsf{T}}{\partial\boldsymbol{\gamma}} \frac{\partial\mathbf{d}^\mathsf{H}}{\partial\mathbf{\Theta}}$, $\mathbf{S}_{22}$ takes the form
	\begin{equation} \label{eq:Fisher:rest}
	\mathbf{S}_{22} = \frac{2g^2}{N_0} \int_{0}^{2\pi}
	\frac{\partial\mathbf{\Theta}^\mathsf{T}}{\partial\boldsymbol{\gamma}}
	\left(\Re \int_{T} \frac{\partial\mathbf{d}^\mathsf{H}}{\partial\mathbf{\Theta}} \frac{\partial\mathbf{d}}{\partial\mathbf{\Theta}^\mathsf{T}}
	\,\mathrm{d}t \right) \frac{\partial\mathbf{\Theta}}{\partial\boldsymbol{\gamma}^\mathsf{T}} \|\dot{\mathbf{r}}(u)\| \,\mathrm{d}u
	\end{equation} 
	where the dependency on $u$ was omitted for brevity.
	Using the fact that $\frac{\partial\mathbf{\Theta}^\mathsf{T}}{\partial\boldsymbol{\gamma}} = [\frac{\partial d}{\partial\boldsymbol{\gamma}}\ \frac{\partial \phi}{\partial\boldsymbol{\gamma}} \ \frac{\partial \beta}{\partial\boldsymbol{\gamma}}]$, whose entries are computed in Appendix~\ref{app:derivatives_intermediate_param}, and by expanding $\Re \int_{T} \frac{\partial\mathbf{d}^\mathsf{H}}{\partial\mathbf{\Theta}} \frac{\partial\mathbf{d}}{\partial\mathbf{\Theta}^\mathsf{T}}
	\,\mathrm{d}t$ using the formulas in  Appendix~\ref{app:2nd_derivatives}, we find that \eqref{eq:Fisher:rest} becomes
	\begin{multline}
	\mathbf{S}_{22} = \frac{2g^2N}{N_0}
	\int_{0}^{2\pi} w^2 \Bigg[
	\left(4\pi\frac{B_\text{RMS}}{c}\right)^2
	\boldsymbol{\mu}
	\boldsymbol{\mu}^\mathsf{T} \\
	+\frac{\pi^2}{12} (N^2-1) \cos^2(\phi)
	\boldsymbol{\eta}
	\boldsymbol{\eta}^\mathsf{T}
	+ \frac{1}{w^2}(\alpha+1)^2v^2 \boldsymbol{\xi} \boldsymbol{\xi}^\mathsf{T} \Bigg] \|\dot{\mathbf{r}}(u)\| \,\mathrm{d}u \label{eq:S22}
	\end{multline}
	where $w$ and $v$ are defined in \eqref{eq:w&v}, 
	$\boldsymbol{\mu} = \partial d/\partial\boldsymbol{\gamma}$, $\boldsymbol{\eta} = \partial \phi/\partial\boldsymbol{\gamma}$ and $\boldsymbol{\xi} = \boldsymbol{\eta} -\partial \beta /\partial\boldsymbol{\gamma}$. 
	
	Similarly to the steps \eqref{eq:S22:start}--\eqref{eq:Fisher:rest}, the following expressions for $s_{11}$ and $\mathbf{s}_{21}$ in \eqref{eq:Fisher:submatrix} are also obtained:
	\begin{align}
	s_{11} 
	&=\frac{2}{N_0}\ell_\text{R} \Re \int_{T} \E \left[ \left\|\sum_{k=1}^{K} h_k \mathbf{d}_k\right\|^2 \right] \mathrm{d}t \nonumber \\
	&\approx \frac{2}{N_0} \int_{0}^{2\pi} \left(\int_{T} \left\|\mathbf{d}\right\|^2 \mathrm{d}t\right) \|\dot{\mathbf{r}}(u)\| \,\mathrm{d}u \\
	\mathbf{s}_{21} 
	&= \frac{2g}{N_0}\ell_\text{R} \Re \int_{T} \sum_{k=1}^K \sum_{k'=1}^K \E\left[h_k^* h_{k'} \right] \frac{\partial\mathbf{d}_k^\mathsf{H}}{\partial \boldsymbol{\gamma}}
	\mathbf{d}_{k'} \,\mathrm{d}t \nonumber \\
	&\approx \frac{2g}{N_0} \int_{0}^{2\pi}
	\frac{\partial\mathbf{\Theta}^\mathsf{T}}{\partial\boldsymbol{\gamma}}
	\left(\Re \int_{T} \frac{\partial\mathbf{d}^\mathsf{H}}{\partial\mathbf{\Theta}}
	\mathbf{d}
	\,\mathrm{d}t \right) \|\dot{\mathbf{r}}(u)\| \,\mathrm{d}u.
	\end{align}
	Combining the above expressions with the identities of Appendix~\ref{app:identities} and Appendix~\ref{app:other},
	\begin{align}
	s_{11} &= 
	\frac{2 N}{N_0} \int_{0}^{2\pi} w^2 \left\|\dot{\mathbf{r}}(u)\right\|\,\mathrm{d}u \label{eq:s11} \\
	\mathbf{s}_{21} &= 
	\frac{2g}{N_0} \int_{0}^{2\pi}
	\frac{\partial\mathbf{\Theta}^\mathsf{T}}{\partial\boldsymbol{\gamma}}
	\left(\Re \int_{T} \frac{\partial\mathbf{d}^\mathsf{H}}{\partial\mathbf{\Theta}}
	\mathbf{d}
	\,\mathrm{d}t \right) \|\dot{\mathbf{r}}(u)\| \,\mathrm{d}u \nonumber \\
	&= \frac{2gN}{N_0} \int_{0}^{2\pi}
	w (\alpha+1)v\,\boldsymbol{\xi} \|\dot{\mathbf{r}}(u)\| \,\mathrm{d}u. \label{eq:S21}
	\end{align}
	Finally, by plugging \eqref{eq:S22}, \eqref{eq:s11} and \eqref{eq:S21} into the EFIM \eqref{eq:Fisher:submatrix}, and using the star product  \eqref{eq:star_product}, the final formula \eqref{eq:Fisher} follows.

	\subsection{Derivation of $\partial\mathbf{\Theta}^\mathsf{T}/\partial\boldsymbol{\gamma}$} \label{app:derivatives_intermediate_param}
	
	Explicit formulas for  $\partial\mathbf{\Theta}^\mathsf{T}/\partial\boldsymbol{\gamma} = [\partial d/\partial\boldsymbol{\gamma} \ \partial \phi/\partial\boldsymbol{\gamma} \ \partial \beta/\partial\boldsymbol{\gamma}]$ are listed next (we recall the functional dependencies $\boldsymbol{\rho}(\mathbf{m},\mathbf{n})$, $\mathbf{R}(\varphi)$, $\mathbf{p}(\mathring{d},\mathring{\phi})$, $\mathbf{r}(\mathring{d},\mathring{\phi},\varphi,\mathbf{m},\mathbf{n})$):
	\begin{align}
	\boldsymbol{\mu} \triangleq \frac{\partial d}{\partial\boldsymbol{\gamma}} =
	\begin{bmatrix}
	\partial d /\partial\mathring{d} \\ \partial d /\partial\mathring{\phi} \\ \partial d /\partial\varphi \\ \partial d /\partial \mathbf{m} \\ \partial d /\partial \mathbf{n}
	\end{bmatrix} &=
	\frac{1}{d}
	\begin{bmatrix}
	\mathring{d}+\mathring{d}^{-1}\boldsymbol{\rho}^\mathsf{T} \mathbf{R}^\mathsf{T} \mathbf{p}  \\
	\boldsymbol{\rho}^\mathsf{T} \mathbf{R}^\mathsf{T} \mathbf{p}_\perp \\
	-\boldsymbol{\rho}^\mathsf{T} \mathbf{R}^\mathsf{T} \mathbf{p}_\perp \\
	(\begin{bmatrix}1 & 0 \end{bmatrix}\mathbf{R}^\mathsf{T} \mathbf{r}) \boldsymbol{\sigma} \\
	(\begin{bmatrix}0 & 1 \end{bmatrix}\mathbf{R}^\mathsf{T} \mathbf{r}) \boldsymbol{\varsigma}
	\end{bmatrix}  \label{eq::mu} 
\end{align}
\begin{align}
	\boldsymbol{\eta} \triangleq \frac{\partial \phi}{\partial\boldsymbol{\gamma}} =
	\begin{bmatrix}
	\partial \phi /\partial\mathring{d} \\ \partial \phi /\partial\mathring{\phi} \\ \partial \phi /\partial\varphi \\ \partial \phi /\partial \mathbf{m} \\ \partial \phi /\partial \mathbf{n}
	\end{bmatrix} &= \frac{1}{d^2}
	\begin{bmatrix}
	\mathring{d}^{-1}\boldsymbol{\rho}^\mathsf{T} \mathbf{R}^\mathsf{T} \mathbf{p}_\perp \\
	\mathring{d}^{2} +\boldsymbol{\rho}^\mathsf{T} \mathbf{R}^\mathsf{T} \mathbf{p} \\
	\boldsymbol{\rho}^\mathsf{T} \mathbf{R}^\mathsf{T} \mathbf{r} \\
	(\begin{bmatrix}1 & 0 \end{bmatrix}\mathbf{R}^\mathsf{T} \mathbf{r}_\perp ) \boldsymbol{\sigma} \\
	(\begin{bmatrix}0 & 1 \end{bmatrix}\mathbf{R}^\mathsf{T} \mathbf{r}_\perp) \boldsymbol{\varsigma}
	\end{bmatrix} \label{eq::eta}
	\end{align}
\begin{align}
	\frac{\partial \beta}{\partial\boldsymbol{\gamma}} =
	\begin{bmatrix}
	\partial \beta /\partial\mathring{d} \\ \partial \beta /\partial\mathring{\phi} \\ \partial \beta /\partial\varphi \\ \partial \beta /\partial \mathbf{m} \\ \partial \beta /\partial \mathbf{n}
	\end{bmatrix} &=
	\frac{1}{\left\|\dot{\boldsymbol{\rho}}\right\|^2}
	\begin{bmatrix}
	0 \\ 0 \\ \left\|\dot{\boldsymbol{\rho}}\right\|^2 \\ 
	- (\dot{\boldsymbol{\varsigma}}^\mathsf{T}\mathbf{n}) \dot{\boldsymbol{\sigma}} \\
	( \dot{\boldsymbol{\sigma}}^\mathsf{T}\mathbf{m}) \dot{\boldsymbol{\varsigma}}
	\end{bmatrix}
	\end{align}
	where $\mathbf{x}_\perp = \left(\begin{smallmatrix}0 & -1 \\ 1 & 0\end{smallmatrix}\right) \mathbf{x}$ for any arbitrary vector $\mathbf{x}$, $\dot{\boldsymbol{\sigma}}=\partial\boldsymbol{\sigma}/\partial u$, $\dot{\boldsymbol{\varsigma}}=\partial\boldsymbol{\varsigma}/\partial u$,  $\dot{\boldsymbol{\rho}}=\partial\boldsymbol{\rho}/\partial u$, and we recall from Appendix~\ref{app1A}  
	\begin{equation}
	    \boldsymbol{\xi} = \boldsymbol{\eta} -\partial \beta /\partial\boldsymbol{\gamma}. \label{eq::xi}
	\end{equation}

	\subsection{Derivation of $\Re \int_{T} \frac{\partial\mathbf{d}^\mathsf{H}}{\partial\mathbf{\Theta}} \frac{\partial\mathbf{d}}{\partial\mathbf{\Theta}^\mathsf{T}}
	\,\mathrm{d}t$} \label{app:2nd_derivatives}
	
	First, compute $\frac{\partial\mathbf{d}}{\partial\mathbf{\Theta}^\mathsf{T}}$:
	\begin{align}
	\frac{\partial\mathbf{d}}{\partial d} =&
	-\frac{2}{c} (\sinTrunc (\phi-\beta))^{\alpha+1}\mathbf{a}(\phi)
	\dot{s}\left(t-\frac{2d}{c}\right) \\
	\begin{split}
	\frac{\partial\mathbf{d}}{\partial \phi} =&
	(\sinTrunc (\phi-\beta))^{\alpha}
	\big[(\alpha+1) \cos(\phi-\beta) \mathbf{a}(\phi) \\
	&+\sinTrunc(\phi-\beta) \dot{\mathbf{a}}(\phi)\big]
	s\left(t-\frac{2d}{c}\right) 
	\end{split} \\
	\frac{\partial\mathbf{d}}{\partial \beta} =&
	-(\alpha+1) (\sinTrunc (\phi-\beta))^{\alpha} \cos(\phi-\beta) \mathbf{a}(\phi) s\left(t-\frac{2d}{c}\right)
	\end{align}
	where $\dot{s}(t) = \partial s(t)/\partial t$ and $\dot{\mathbf{a}}(\phi) = \partial \mathbf{a}(\phi)/\partial \phi$.
	In combination with the identities of Appendix~\ref{app:identities}, the expressions for all entries in $\Re\int_{T}\frac{\partial\mathbf{d}^\mathsf{H}}{\partial\mathbf{\Theta}}\frac{\partial\mathbf{d}}{\partial\mathbf{\Theta}^\mathsf{T}}\mathrm{d}t$ are
	\begin{align*}
	\Re	\int_{T}
	\frac{\partial\mathbf{d}^\mathsf{H}}{\partial d} \frac{\partial\mathbf{d}}{\partial d}
	\mathrm{d}t &=
	4 (\sinTrunc(\phi-\beta))^{2(\alpha+1)}  N \left(\frac{2\pi B_\text{RMS}}{c}\right)^2 \\
	\begin{split}
	\Re \int_{T} \frac{\partial\mathbf{d}^\mathsf{H}}{\partial \phi} \frac{\partial\mathbf{d}}{\partial \phi} \mathrm{d}t &=
	(\sinTrunc(\phi-\beta))^{2\alpha}  N \Big[(\alpha+1)^2 \cos^2(\phi-\beta) \\
	&+\sinTrunc^2(\phi-\beta)\cos^2(\phi) \frac{\pi^2}{12}(N^2-1)\Big] 
	\end{split}\\
	\Re \int_{T}
	\frac{\partial\mathbf{d}^\mathsf{H}}{\partial \beta} \frac{\partial\mathbf{d}}{\partial \beta}
	\mathrm{d}t &=  (\alpha+1)^2 (\sinTrunc(\phi-\beta))^{2\alpha} \cos^2(\phi-\beta) N \\
	\Re \int_{T}
	\frac{\partial\mathbf{d}^\mathsf{H}}{\partial \phi} \frac{\partial\mathbf{d}}{\partial \beta}
	\mathrm{d}t &= - (\alpha+1)^2 (\sinTrunc(\phi-\beta))^{2\alpha} \cos^2(\phi-\beta) N \\
	\Re	\int_{T}
	\frac{\partial\mathbf{d}^\mathsf{H}}{\partial d} \frac{\partial\mathbf{d}}{\partial \phi}
	\mathrm{d}t &= 
	\Re	\int_{T}
	\frac{\partial\mathbf{d}^\mathsf{H}}{\partial d} \frac{\partial\mathbf{d}}{\partial \beta}
	\mathrm{d}t = 0.
	\end{align*}
	The matrix form accepts a more succinct form,
	\begin{equation}
	\Re\int_{T}\frac{\partial\mathbf{d}^\mathsf{H}}{\partial\mathbf{\Theta}}\frac{\partial\mathbf{d}}{\partial\mathbf{\Theta}^\mathsf{T}}\mathrm{d}t =
	N (\sinTrunc(\phi-\beta))^{2\alpha}
	\left(\mathbf{g} \mathbf{g}^\mathsf{T} + \mathbf{G} \right)
	\end{equation}
	where
	\begin{align}
	\mathbf{g} &= (\alpha+1) \cos(\phi-\beta)
	\begin{bmatrix}
	0 & 1 & -1
	\end{bmatrix}^\mathsf{T} \\
	\mathbf{G} &=
	\sinTrunc^2(\phi-\beta)
	\begin{bmatrix}
	4 \left(\frac{2\pi B_\text{RMS}}{c}\right)^2 & 0 & 0 \\
	0 & \frac{\pi^2}{12}(N^2-1) \cos^2(\phi) & 0 \\
	0 & 0 & 0
	\end{bmatrix}.
	\end{align}

	\vspace{-0.4cm}
	\subsection{Useful identities} \label{app:identities}
	
	Consider a radar equipped with a ULA parallel to the ground such that its array response is $\mathbf{a}(\phi) = \exp(\mathrm{j} \pi (N-1)/2 \sin\phi) [1\ \exp(-\mathrm{j} \pi \sin\phi)\ \cdots\ \exp(-\mathrm{j} \pi (N-1)\sin\phi)]^\mathsf{T}$. Here, the reference element with phase 0 is taken at the center of the ULA because it results in the tightest bound. Then,
	\begin{align}
	\|\mathbf{a}(\phi)\|^2 &= N \\
	\Re \left(\mathbf{a}^\mathsf{H}(\phi) \dot{\mathbf{a}}(\phi)\right) &= \frac{1}{2}\frac{\partial \|\mathbf{a}(\phi)\|^2}{\partial\phi} =  0
	\end{align}
	and from $\dot{\mathbf{a}}(\phi) = \mathrm{j} \pi \cos(\phi) \diag \left(\frac{N-1}{2}, \ldots, -\frac{N-1}{2}\right) \mathbf{a}(\phi)$
	\begin{equation}
	\|\dot{\mathbf{a}}(\phi)\|^2 = \cos^2(\phi) \pi^2 (N-1)N(N+1)/12.
	\end{equation}
	Regarding the signal waveform, by assumption $\int_{-\infty}^{+\infty} |s(t-\tau)|^2 \mathrm{d}t =1$, and using the Fourier transform we get
	\begin{multline}
	\int_{-\infty}^{+\infty} s^*(t-\tau) \dot{s}(t-\tau) \,\mathrm{d}t \\
	= \int_{-\infty}^{+\infty} \left[S(f) e^{-\mathrm{j} 2\pi\tau f}\right]^* \mathrm{j} 2\pi f S(f) e^{-\mathrm{j}2\pi\tau f} \mathrm{d}f \\
	= \mathrm{j} 2\pi \int_{-\infty}^{+\infty}  f\left|S(f)\right|^2 \mathrm{d}f = 0 \label{eq:center_freq}
	\end{multline}
	{\ed because by interpreting the integral $\int_{-\infty}^{+\infty}  f\left|S(f)\right|^2 \mathrm{d}f$ as the center of mass of the signal spectrum $\left|S(f)\right|^2$, the latter can be arbitrarily shifted in frequency in such a way that its center of mass is located at zero. In the usual case that the spectrum has even symmetry, the equality condition in \eqref{eq:center_freq} is readily satisfied. Moreover,}
	\begin{multline*}
	\int_{-\infty}^{+\infty} \dot{s}^*(t-\tau) \dot{s}(t-\tau) \,\mathrm{d}t \\
	\!\!= \! \int_{-\infty}^{+\infty} \!\!\! \left[\mathrm{j} 2\pi f S(f) e^{-\mathrm{j} 2\pi\tau f}\right]^* \! \mathrm{j} 2\pi  f S(f) e^{-\mathrm{j} 2\pi\tau f} \mathrm{d}f 
	\!=\! (2\pi)^2 B_\text{RMS}^2
	\end{multline*}
	where $B_\text{RMS} \triangleq (\int_{-\infty}^{+\infty}  f^2 \left|S(f)\right|^2 \mathrm{d}f)^{1/2}$ is known as the RMS or effective bandwidth.

	\subsection{Derivation of $\Re \int_{T}\mathbf{d}^\mathsf{H} \frac{\partial\mathbf{d}}{\partial\mathbf{\Theta}^\mathsf{T}}\mathrm{d}t$} \label{app:other}
	
	With the help of the identities in Appendix~\ref{app:identities}, we find that
	\begin{align}
	\Re \int_{T}\mathbf{d}^\mathsf{H} \frac{\partial\mathbf{d}}{\partial d}\mathrm{d}t &= 0 \\
	\Re \int_{T}\mathbf{d}^\mathsf{H} \frac{\partial\mathbf{d}}{\partial \phi}\mathrm{d}t &=
	(\alpha+1) N (\sinTrunc(\phi-\beta))^{2\alpha+1} \cos(\phi-\beta) \\
	\Re \int_{T}\mathbf{d}^\mathsf{H} \frac{\partial\mathbf{d}}{\partial \beta}\mathrm{d}t &=
	- (\alpha+1) N (\sinTrunc(\phi-\beta))^{2\alpha+1} \cos(\phi-\beta).
	\end{align}
	Putting them together,
	\begin{equation}
	\Re \int_{T}\mathbf{d}^\mathsf{H} \frac{\partial\mathbf{d}}{\partial\mathbf{\Theta}^\mathsf{T}}\mathrm{d}t =
	N (\sinTrunc(\phi-\beta))^{2\alpha+1} \mathbf{g}^\mathsf{T}.
	\end{equation}

	\section{Proof of Theorem~\ref{lemma:large_distance}} \label{app:large_distance}

	Theorem~\ref{lemma:large_distance} equivalently is $\lim_{\mathring{d}\rightarrow\infty} \mathring{d}^{4} [\mathbf{J}(\boldsymbol{\gamma}) -(2 E/N_0)\mathbf{T}] = \mathbf{0}$, 
	or alternatively, $\lim_{\mathring{d}\rightarrow\infty} \mathring{d}^{4} \mathbf{J}(\boldsymbol{\gamma}) = \lim_{\mathring{d}\rightarrow\infty} \mathring{d}^{4} 2\frac{E}{N_0} \mathbf{T}$ if both limits exist.
	First, define the partition
	\begin{equation}
	\mathbf{J}(\boldsymbol{\gamma}) = 
	\begin{bmatrix}
	\mathbf{J}_{11} & \mathbf{J}_{12} \\
	\mathbf{J}_{12}^\mathsf{T} & \mathbf{J}_{22}
	\end{bmatrix}
	\end{equation}
	with the same block sizes than the partition of $\mathbf{T}$ in Theorem~\ref{lemma:large_distance}, and also partition the following vectors: $\boldsymbol{\mu} = [\boldsymbol{\mu}_1^\mathsf{T} \ \boldsymbol{\mu}_2^\mathsf{T}]^\mathsf{T}$, $\boldsymbol{\eta} = [\boldsymbol{\eta}_1^\mathsf{T} \ \boldsymbol{\eta}_2^\mathsf{T}]^\mathsf{T}$, $\boldsymbol{\xi} = [\boldsymbol{\xi}_1^\mathsf{T} \ \boldsymbol{\xi}_2^\mathsf{T}]^\mathsf{T}$, whose definition was given in   Appendix~\ref{app:derivatives_intermediate_param} and we recall that $\boldsymbol{\xi} = \boldsymbol{\eta} -\partial \beta /\partial\boldsymbol{\gamma}$. 
	The matrix equality is split into three matrix equalities:
	\begin{equation} \label{eq:matrix_inequalities}
	\lim_{\mathring{d}\rightarrow\infty} \mathring{d}^{4} \mathbf{J}_{nm} = \lim_{\mathring{d}\rightarrow\infty} \mathring{d}^{4} 2\frac{E}{N_0} \mathbf{T}_{nm}	
	\end{equation}
	where $(n,m) \in \{(1,1),(2,1),(2,2)\}$.
	For $(m,n)=(1,1)$ the left-hand side of \eqref{eq:matrix_inequalities} is
	\begin{multline} \label{eq:J11_limit}
	\lim_{\mathring{d}\rightarrow\infty} \mathring{d}^{4}\, \mathbf{J}_{11} 
	= \frac{2NG}{N_0}
	\int_{0}^{2\pi}
	\lim_{\mathring{d}\rightarrow\infty}\Big[
	w^2 
	\boldsymbol{\mu}_1
	\boldsymbol{\mu}_1^\mathsf{T} \\
	+w^2 M \cos^2(\phi)
	\boldsymbol{\eta}_1
	\boldsymbol{\eta}_1^\mathsf{T}
	+ (\alpha+1)^2 \operatorname{P}_w^\perp\left(v\,\boldsymbol{\xi}_1\right) \operatorname{P}_w^\perp\left(v\,\boldsymbol{\xi}_1^\mathsf{T}\right)  \Big]
	\|\dot{\mathbf{r}}\| \,\mathrm{d}u
	\end{multline}
	where the limit was passed inside the integral by  the monotone convergence theorem and we used the fact that $g^2 \mathring{d}^{4} =G$ is a constant. The limits are computed by applying them to each component that depends on $\mathring{d}$ separately, for instance, $\lim_{\mathring{d}\rightarrow\infty} w^2  \boldsymbol{\mu}_1 \boldsymbol{\mu}_1^\mathsf{T} = (\lim_{\mathring{d}\rightarrow\infty} w)^2  (\lim_{\mathring{d}\rightarrow\infty} \boldsymbol{\mu}_1) (\lim_{\mathring{d}\rightarrow\infty} \boldsymbol{\mu}_1)^\mathsf{T}$, resulting in
	\begin{align}
	& \!\!\int_{0}^{2\pi} \!\!	\lim_{\mathring{d}\rightarrow\infty} \!
	w^2  \boldsymbol{\mu}_1 \boldsymbol{\mu}_1^\mathsf{T} \|\dot{\mathbf{r}}\| \,\mathrm{d}u \! =\!
	\|\mathring{w}\|_\star^2
	\left[ \!\! \begin{array}{ccc}
	 L & \!\!\!\mathring{A} & \!\!\! -\mathring{A}\\
	\mathring{A} & \!\!\! \mathring{B}_1 & \!\!\! -\mathring{B}_1 \\
	-\mathring{A} & \!\!\! -\mathring{B}_1 & \!\!\! \mathring{B}_1
	\end{array} \!\!\right] \!\! \label{eq:term1} \\
	& \!\!\int_{0}^{2\pi} \!\!	\lim_{\mathring{d}\rightarrow\infty}
	w^2 M \cos^2(\phi) \boldsymbol{\eta}_1 \boldsymbol{\eta}_1^\mathsf{T} \|\dot{\mathbf{r}}\| \,\mathrm{d}u =
	\|\mathring{w}\|_\star^2
	\begin{bmatrix}
	0 & 0 & 0 \\
	0 & Z & 0 \\
	0 & 0 & 0
	\end{bmatrix}\\
	& \!\! \begin{multlined}
	\int_{0}^{2\pi}\!\! \lim_{\mathring{d}\rightarrow\infty} (\alpha+1)^2
	\operatorname{P}_w^\perp\left(v\,\boldsymbol{\xi}_1\right) \operatorname{P}_w^\perp\left(v\,\boldsymbol{\xi}_1^\mathsf{T}\right) \|\dot{\mathbf{r}}\| \,\mathrm{d}u \\
	= \|\mathring{w}\|_\star^2
	\begin{bmatrix}
	0 & 0 & 0 \\
	0 & \mathring{B}_2 & -\mathring{B}_2 \\
	0 & -\mathring{B}_2 & \mathring{B}_2
	\end{bmatrix} \label{eq:term3}
	\end{multlined}
	\end{align}
	where 
	\begin{align}
	\mathring{w} &= (\sinTrunc(\mathring{\phi}-\beta))^{\alpha+1} \\
	\mathring{A} &=  \|\mathring{w}\|_\star^{-2} \langle\mathring{w},\mathring{w}\,\bar{\mathbf{p}}_\perp^\mathsf{T}\mathbf{R}\boldsymbol{\rho} \rangle_\star \\
	\mathring{B}_1 &=  \|\mathring{w}\|_\star^{-2} \|\mathring{w}\,\bar{\mathbf{p}}_\perp^\mathsf{T}\mathbf{R}\boldsymbol{\rho}\|_\star^2  \\
	\mathring{B}_2 &= \|\mathring{w}\|_\star^{-2}(\alpha+1) \|\operatorname{P}_{\mathring{w}}^\perp(\mathring{v})\|_\star^2 \\
	\mathring{v} &= (\sinTrunc(\mathring{\phi}-\beta))^{\alpha} \cos(\mathring{\phi}-\beta)
	\end{align}
	Summing up \eqref{eq:term1}--\eqref{eq:term3} produces
	\begin{equation} \label{eq:J11_result}
	\lim_{\mathring{d}\rightarrow\infty} \mathring{d}^{4}\, \mathbf{J}_{11} =
	\frac{2N G \|\mathring{w}\|_\star^2}{N_0}
	\begin{bmatrix}
	L & \mathring{A} & -\mathring{A} \\
	\mathring{A} & \mathring{B}+\mathring{Z} & -\mathring{B} \\
	-\mathring{A} & -\mathring{B} & \mathring{B}
	\end{bmatrix}
	\end{equation}
	where $\mathring{B} = \mathring{B}_1+\mathring{B}_2$. Since $\lim_{\mathring{d}\rightarrow\infty}A = \mathring{A}$ and $\lim_{\mathring{d}\rightarrow\infty}B = \mathring{B}$,  the right-hand side of \eqref{eq:matrix_inequalities} for $(m,n)=(1,1)$ results in \eqref{eq:J11_result} too, thus proving the proof for $(m,n)=(1,1)$.
	
	For $(m,n) \in \{(2,1),(2,2)\}$, we follow the same procedure to prove that the left and right-hand side of \eqref{eq:matrix_inequalities} is
	\begin{align}
	\lim_{\mathring{d}\rightarrow\infty} \mathring{d}^{4} \mathbf{J}_{21} 
	&=	\frac{2NG \|\mathring{w}\|_\star^2}{N_0}
	\begin{bmatrix}
	\mathring{\mathbf{c}} & \mathring{\mathbf{q}} & -\mathring{\mathbf{q}}
	\end{bmatrix} \\
	\lim_{\mathring{d}\rightarrow\infty} \mathring{d}^{4} \mathbf{J}_{22} 
	&= \frac{2NG \|\mathring{w}\|_\star^2}{N_0}\mathring{\mathbf{T}}_{22},
	\end{align}
	where
	\begin{align}
	\mathring{\mathbf{c}} &= L \langle\mathring{w}\,\mathbf{s},\mathring{w} \rangle_\star/\|\mathring{w}\|_\star^2 \\ 
	\mathring{\mathbf{q}} &= L \frac{\langle\mathring{w}\,\mathbf{s},\mathring{w}\, \bar{\mathbf{p}}_\perp^\mathsf{T}\mathbf{R}\boldsymbol{\rho} \rangle_\star}{\|\mathring{w}\|_\star^{2} }
	+(\alpha+1)^2 \frac{\langle\mathring{\mathbf{t}},\operatorname{P}_{\mathring{w}}^\perp (\mathring{v}) \rangle_\star}{\|\mathring{w}\|_\star^{2} }\\
	\mathring{\mathbf{T}}_{22} &= \left(L \langle\mathring{w}\,\mathbf{s},\mathring{w}\,\mathbf{s} \rangle_\star +(\alpha+1)^2 \langle\mathring{\mathbf{t}},\mathring{\mathbf{t}} \rangle_\star\right) /\|\mathring{w}\|_\star^{2} \\
	\mathring{\mathbf{t}} &= \operatorname{P}_{\mathring{w}}^\perp \left(
	\mathring{v} \left\|\dot{\boldsymbol{\rho}}\right\|^{-2} \begin{bmatrix}
	(\dot{\boldsymbol{\varsigma}}^\mathsf{T}\mathbf{n})\dot{\boldsymbol{\sigma}}^\mathsf{T} &
	-(\dot{\boldsymbol{\sigma}}^\mathsf{T}\mathbf{m}) \dot{\boldsymbol{\varsigma}}^\mathsf{T}
	\end{bmatrix}^\mathsf{T}
	\right).
	\end{align}

	
	The expressions for $\mathbf{T}_{21}$ and $\mathbf{T}_{22}$ then follow: 
	\begin{align}
	\mathbf{T}_{21} &= \begin{bmatrix}
		\mathbf{c} & \mathbf{q} & -\mathbf{q}
		\end{bmatrix} \label{eq::T21}\\
	\mathbf{T}_{22} &= L \langle\bar{w}\,\mathbf{s},\bar{w}\,\mathbf{s} \rangle_\star +(\alpha+1)^2 \langle\mathbf{t},\mathbf{t} \rangle_\star /\|w\|_\star^{2} \label{eq::T22}
		\end{align}
		with
	\begin{align}	
	\mathbf{c} &= L \langle\bar{w}\,\mathbf{s},\bar{w} \rangle_\star \label{eq::cbold}\\
	\mathbf{q} &= L \langle\bar{w}\,\mathbf{s},\bar{w}\, \bar{\mathbf{p}}_\perp^\mathsf{T}\mathbf{R}\boldsymbol{\rho} \rangle_\star 
	+(\alpha+1)^2 \langle\mathbf{t},\operatorname{P}_w^\perp (v) \rangle_\star /\|w\|_\star^{2} \label{eq::dbold}
	\end{align}
	and, in addition to the symbols defined in the statement of the theorem, 
	\begin{align}
	\mathbf{s} =&
	\begin{bmatrix}
	[1\ 0] \mathbf{R}^\mathsf{T} \bar{\mathbf{p}}\, \boldsymbol{\sigma}^\mathsf{T} &
	[0\ 1] \mathbf{R}^\mathsf{T} \bar{\mathbf{p}}\, \boldsymbol{\varsigma}^\mathsf{T}
	\end{bmatrix}^\mathsf{T} \label{eq:g} \\
	\mathbf{t} =& \operatorname{P}_w^\perp \left(
	v\left\|\dot{\boldsymbol{\rho}}\right\|^{-2} \begin{bmatrix}
	(\dot{\boldsymbol{\varsigma}}^\mathsf{T}\mathbf{n})\dot{\boldsymbol{\sigma}}^\mathsf{T} &
	-(\dot{\boldsymbol{\sigma}}^\mathsf{T}\mathbf{m}) \dot{\boldsymbol{\varsigma}}^\mathsf{T}
	\end{bmatrix}^\mathsf{T}
	\right). \label{eq:h}
	\end{align}

	\section{Proof of Proposition~\ref{lemma:far_known}} \label{app:known_contour}

	The proof is articulated in two parts. First, we prove $\mathbf{T}_{11}$ is invertible, then we show that \eqref{eq:far_known} holds true.
	The determinant of $\mathbf{T}_{11}$ must satisfy $Z(BL-A^2)\neq0$, which decomposes into $Z\neq0$ and $A^2\neq BL$. Regarding the first condition, $Z>0$ unless the vehicle is at the ULA endfire  ($\mathring{\phi}=\pm\pi/2$). Regarding the second condition, observe that $A^2 = L^2 \langle\bar{w},\bar{w}\,\bar{\mathbf{p}}_\perp^\mathsf{T}\mathbf{R}\boldsymbol{\rho} \rangle_\star^2$ which is strictly smaller than $\|\bar{w}\,\bar{\mathbf{p}}_\perp^\mathsf{T}\mathbf{R}\boldsymbol{\rho}\|_\star^2$ unless $\mathbf{p}_\perp^\mathsf{T}\mathbf{R}\boldsymbol{\rho}(u)$ is constant for all $u$ in the illuminated part of the  contour. But $\mathbf{p}_\perp^\mathsf{T}\mathbf{R}\boldsymbol{\rho}(u)$ constant would imply a vehicle with no width, thus $A^2 < \|\bar{w}\,\bar{\mathbf{p}}_\perp^\mathsf{T}\mathbf{R}\boldsymbol{\rho}\|_\star^2 \leq BL$, concluding the first part of the proof.
	
	The EFIM in Theorem~\ref{lemma:large_distance} reduces to the $3\times3$ matrix $\mathbf{J}_{11} = 
	2(E/N_0) \mathbf{T}_{11} +o(\mathring{d}^{-4})\text{ as }\mathring{d}\rightarrow\infty$ because the vehicle contour is known, and proving~\eqref{eq:far_known} is equivalent to verifying $\lim_{\mathring{d}\rightarrow\infty} \mathring{d}^{-4} [\mathbf{C} -(2E/N_0)^{-1} \mathbf{T}_{11}^{-1}] = \mathbf{0}$ \cite[eq.~(2)]{balcazar1989nonuniform}, where we use the shorthand notation $\mathbf{C}=\mathbf{C}(\mathring{d},\mathring{\phi},\varphi)$. The left-hand side of the latter condition can be expressed as
	\begin{multline}
	\lim_{\mathring{d}\rightarrow\infty} \mathring{d}^{-4} \left[\mathbf{C} -\left(2\frac{E}{N_0}\right)^{-1} \mathbf{T}_{11}^{-1}\right]
	= \\ \lim_{\mathring{d}\rightarrow\infty} \left(\mathring{d}^{4} \mathbf{J}_{11}\right)^{-1} \left[\mathring{d}^{4}2\frac{E}{N_0} \mathbf{T}_{11} - \mathring{d}^{4}\mathbf{J}_{11} \right] \left(\mathring{d}^{4}2\frac{E}{N_0} \mathbf{T}_{11}\right)^{-1}
	\end{multline}
	because $\mathbf{C}=\mathbf{J}_{11}^{-1}$. The limit of a matrix product is the product of the limits if they are finite. Being the inverse a continuous function, $\lim_{\mathring{d}\rightarrow\infty} (\mathring{d}^{4} \mathbf{J}_{11})^{-1} = (\lim_{\mathring{d}\rightarrow\infty} \mathring{d}^{4} \mathbf{J}_{11})^{-1}$,  the latter computed in \eqref{eq:J11_result}. Same applies to $\lim_{\mathring{d}\rightarrow\infty}(\mathring{d}^{4}2E N_0^{-1} \mathbf{T}_{11})^{-1} = (\lim_{\mathring{d}\rightarrow\infty} \mathring{d}^{4}2E N_0^{-1} \mathbf{T}_{11})^{-1}$. All inverses can be proved to exist because $\mathbf{T}_{11}$ is invertible. Lastly, $\lim_{\mathring{d}\rightarrow\infty}[\mathring{d}^{4}2E N_0^{-1} \mathbf{T}_{11} - \mathring{d}^{4}\mathbf{J}_{11}] = 0$ by Theorem~\ref{lemma:large_distance}.

	

	\vspace{-0.1cm}
	\section{Proof of Proposition~\ref{lemma:far_unknown}} \label{app:unknown_contour}
	
	Eq.~\eqref{eq:far_unknown} is equivalent to $[\mathbf{C}(\boldsymbol{\gamma})]_{1:3,1:3} = (2E N_0^{-1})^{-1} [\mathbf{T}^{-1}]_{1:3,1:3} +o(\mathring{d}^{4})$ because $[\mathbf{C}(\boldsymbol{\gamma})]_{1:3,1:3} = \mathbf{C}(\mathring{d},\mathring{\phi},\varphi)$ and it is easily verified that $[\mathbf{T}^{-1}]_{1:3,1:3} = \mathbf{U}$ by the block inversion formula.
	Therefore, it suffices to prove the more general statement $\mathbf{C}(\boldsymbol{\gamma}) = (2E N_0^{-1})^{-1} \mathbf{T}^{-1} +o(\mathring{d}^{4})$, which is equivalent to $\lim_{\mathring{d}\rightarrow\infty} \mathring{d}^{-4} [\mathbf{C} -(2E/N_0)^{-1} \mathbf{T}^{-1}] = \mathbf{0}$.
	The left-hand side can be expressed as
	\begin{multline}
	\lim_{\mathring{d}\rightarrow\infty} \mathring{d}^{-4} \left[\mathbf{C} -\left(2\frac{E}{N_0}\right)^{-1} \mathbf{T}^{-1}\right] 
	= \\ \lim_{\mathring{d}\rightarrow\infty} \left(\mathring{d}^{4} \mathbf{J}\right)^{-1}
	\left[\mathring{d}^{4}2\frac{E}{N_0} \mathbf{T} - \mathring{d}^{4}\mathbf{J} \right] \left(\mathring{d}^{4}2\frac{E}{N_0} \mathbf{T}\right)^{-1}
	\end{multline}
	since $\mathbf{C} = \mathbf{J}^{-1}$. As in the proof of Proposition~\ref{lemma:far_known}, we can prove that $\lim_{\mathring{d}\rightarrow\infty} \mathring{d}^{-4}(\mathring{d}^{4} \mathbf{J})^{-1}$ and $\lim_{\mathring{d}\rightarrow\infty} (\mathring{d}^{4}2E N_0^{-1}\mathbf{T})^{-1}$ exist, and that $\lim_{\mathring{d}\rightarrow\infty}[\mathring{d}^{4}2 E N_0^{-1} \mathbf{T} - \mathring{d}^{4}\mathbf{J}] = 0$ by Theorem~\ref{lemma:large_distance}, concluding the proof. The existence of the matrix inverses has been verified numerically.

	\bibliographystyle{IEEEtran}
	\bibliography{IEEEabrv,./references}
	
\end{document}